\DeclareSymbolFontAlphabet{\mathbbl}{bbold}
\begin{document}

\newcommand{\lts}{LTS}
\newcommand{\p}[1]{\mbox{{#1}\hspace{-0.20em}\c{}\hspace{0.20em}}}
\newcommand{\anz}{{\tt \#}}\newcommand{\infunnel}{{}^\circ}
\newcommand{\es}{\emptyset}
\newcommand{\pminus}{
\mbox{\textrm{$-$}\!\!\!\!\!\!\!\:\,\,\raisebox{1.5mm}{$\scriptstyle \bullet$}}\:}
\newcommand{\fto}[1]{\stackrel{#1}{\rightarrow}}
\newcommand{\ftov}[1]{\downarrow\!{\scriptstyle #1}}
\newcommand{\dt}{\bullet}
\newcommand{\leer}{\varepsilon}
\newcommand{\nsymbol}{\mathbb{N}}
\newcommand{\zsymbol}{\mathbb{Z}}
\newcommand{\qsymbol}{\mathbb{Q}}
\newcommand{\rsymbol}{\mathbb{R}}
\newcommand{\minus}{\setminus}
\newcommand{\support}{\mathit{supp}}
\newcommand{\goesto}{\rightarrow}
\renewcommand{\goesto}[1]{\stackrel{#1}{\longrightarrow}}
\newcommand{\longgoesto}[1]{\stackrel{#1}{-\!\!\!-\!\!\!-\!\!\!-\!\!\!-\!\!\!-\!\!\!-\!\!\!-\!\!\!-\!\!\!-\!\!\!\!\!\!\longrightarrow}}
\newcommand{\verylonggoesto}[1]{\stackrel{#1}{-\!\!\!-\!\!\!-\!\!\!-\!\!\!-\!\!\!-\!\!\!-\!\!\!-\!\!\!-\!\!\!-\!\!\!-\!\!\!-\!\!\!-\!\!\!-\!\!\!-\!\!\!-\!\!\!-\!\!\!-\!\!\!-\!\!\!-\!\!\!-\!\!\!-\!\!\!-\!\!\!\!\!\!\longrightarrow}}
\newcommand{\ggoesto}[1]{\stackrel{#1}{\Longrightarrow}}
\newcommand{\emptyseq}{\varepsilon}
\newcommand{\impl}{\Rightarrow}
\newcommand{\Parikh}{{\mathbf P}}
\def\tp{^{\sf T}}
\newcommand{\tile}[4]{\begin{array}{|ll|}\hline#1&#2\\#3&#4\\\hline\end{array}}
\newcommand{\disjcup}{\makebox[1.0em][c]{
\setlength{\unitlength}{0.4em}
\begin{picture}(2,2)(0,0)
\put(0.84,0.48){{\tiny $\bullet$}}
\put(0.5,0){$\cup$}
\end{picture}
\setlength{\unitlength}{1mm}
}}

\def\projection#1#2{\mathchoice
              {\setbox1\hbox{${\displaystyle #1}_{\scriptstyle #2}$}
              \projectionaux{#1}{#2}}
              {\setbox1\hbox{${\textstyle #1}_{\scriptstyle #2}$}
              \projectionaux{#1}{#2}}
              {\setbox1\hbox{${\scriptstyle #1}_{\scriptscriptstyle #2}$}
              \projectionaux{#1}{#2}}
              {\setbox1\hbox{${\scriptscriptstyle #1}_{\scriptscriptstyle #2}$}
              \projectionaux{#1}{#2}}}
\def\projectionaux#1#2{{#1\,\smash{\vrule height .8\ht1 depth .85\dp1}}_{\,#2}} 

\newcommand{\choice}{\protect\makebox[1.0em][c]{
\protect\setlength{\unitlength}{0.2em}
\protect\begin{picture}(2,4)(0,0)
\protect\put(0,0){\line(1,0){2}}
\protect\put(0,0){\line(0,0){4}}
\protect\put(0,4){\line(1,0){2}}
\protect\put(2,0){\line(0,1){4}}
\protect\end{picture}
\protect\setlength{\unitlength}{1mm}
}}

\newcommand{\prefix}{\sqsubseteq}
\newcommand{\is}{\iota}
\newcommand{\backwd}[1]{\stackrel{\leftharpoonup}{#1}}
\newcommand{\down}{\;\downarrow\!}
\renewcommand{\vec}[1]{\overrightarrow{#1}}
\newcommand{\uniqueP}{\Upsilon}
\newcommand{\vzero}{0}
\newcommand{\from}{\leftarrow}
\newcommand{\drop}[1]{} 
\newcommand{\one}{\mathbbl{1}}
\newcommand{\zero}{\mathbbl{0}}

\renewcommand{\labelitemi}{$\bullet$}
\renewcommand{\labelitemii}{$\circ$}
\renewcommand{\labelitemiii}{$\cdot$}
\renewcommand{\labelitemiv}{$\ast$}
\newcommand{\TS}{\mathit{TS}}
\newcounter{exTS}
\newcommand{\TSref}[1]{\TS_{\ref{#1}}}
\newcommand{\squeese}{\bot}

\newcommand{\lge}{{\mathcal L}} 
\newcommand{\pref}{{ {PREF}}}

\itemsep0pt

\tikzstyle{mypetristyle}=[
      place/.style={circle, draw=black, fill=white, thick, minimum size=4mm},
      transition/.style={rectangle, draw=black, fill=black!30, thick, minimum height=3mm, minimum width=3mm},
      token/.style={circle,draw=black,fill=black,inner sep=0pt,minimum size=1mm}
]

\tikzset{every picture/.style={mypetristyle}}
\tikzset{every label/.style={black!90}}

\title{Synthesis of Weighted Marked Graphs from Constrained Labelled Transition Systems:\\
A Geometric Approach} 
\titlerunning{Synthesis of WMGs from Constrained LTSs: a Geometric Approach}  
\author{Raymond Devillers\inst{1} \and Evgeny Erofeev\thanks{Supported by DFG through \mbox{grant Be 1267/16-1} {\tt ASYST}.}\inst{2} \and Thomas Hujsa(\Letter)\thanks{Supported by  the STAE foundation/project DAEDALUS, Toulouse, France.}\inst{3}}
\institute{
D\'epartement d'Informatique, Universit\'e Libre de Bruxelles,\\ B-1050 Brussels, Belgium
(\email{rdevil@ulb.ac.be})
\and
Department of Computing Science, Carl von Ossietzky Universit\"at Oldenburg,\\ D-26111 Oldenburg, Germany 
(\email{evgeny.erofeev@uni-oldenburg.de})
\and
LAAS-CNRS, Universit\'e de Toulouse, CNRS, Toulouse, France
(\email{thujsa@laas.fr})
}

\maketitle

\begin{abstract}  
Recent studies investigated the problems of analysing Petri nets and synthesising them
from labelled transition systems (LTS) with two labels (transitions) only. 
In this paper, 
we extend these works by providing new 
conditions
for the synthesis of Weighted Marked Graphs (WMGs),
a well-known and useful class of weighted Petri nets in which each place has at most one input and one output.

Some of these new conditions 
do not restrict the number of labels;
the other ones consider up to 3 labels.
Additional constraints are investigated:
when the LTS is either finite or infinite,
and
either cyclic or acyclic.
We show that one of these conditions, 
developed for 3 labels,
does not extend
to 4 nor to 5 labels.
Also, we tackle geometrically
the WMG-solvability 
of finite, acyclic LTS 
with any number of labels. 
\end{abstract}

\keywords{
Weighted Petri net, marked graph, synthesis, labelled transition system, cycles, cyclic words, circular solvability,
theory of regions,
geometric interpretation.  
}

\section{Introduction}\label{intro.sec}

Petri nets form a highly expressive and intuitive operational model of discrete event systems,
capturing the mechanisms of synchronisation, conflict and concurrency. 
Many of their fundamental behavioural properties are decidable, 
allowing to model and analyse numerous artificial and natural systems.
However,
most interesting model checking problems are intractable, 
and the efficiency of synthesis algorithms varies widely 
depending 
on the constraints imposed on the desired solution.
In this study,
we focus on the Petri net synthesis problem from a labelled transition system (LTS),
which consists in determining the existence of a Petri net whose reachability graph
is isomorphic to the given LTS,
and building such a Petri net solution when it exists.

In previous studies on the analysis or synthesis of Petri nets,
structural restrictions encompassed \emph{plain} nets (each weight equals $1$; also called ordinary nets)  \cite{murata89},
\emph{homogeneous} nets (meaning that for each place $p$, all the output weights of $p$ are equal) \cite{STECS,HD2017},
 \emph{free-choice} nets 
 (the net is homogeneous, and any two transitions sharing an input place have the same set of input places) \cite{DesEsp,STECS},
choice-free nets (each place has at most one output transition) \cite{tcs97,PN14}, 
marked graphs (each place has at most one output transition and one input transition) \cite{chep,WTS92,TCS17,DH2018},
join-free nets (each transition has at most one input place) \cite{STECS,ACSD13,TECS14,HD2017}, etc.

More recently, 
another kind of restriction has been considered, 
limiting the number of different transition labels of the LTS in combination with restrictions on the LTS structure: 
for the binary case, feasibility of net synthesis from finite linear LTS and LTS with 
\emph{cycles}\footnote{A set $A$ of $k$ arcs in a LTS $G$ defines a cycle of $G$ 
if the elements of $A$ can be ordered as a sequence $a_1 \ldots a_k$
such that,
for each $i \in \{1, \ldots, k\}$, 
$a_i = (n_i,\ell_i,n_{i+1})$
and
$n_{k+1} = n_1$, 
i.e.
the $i$-th arc $a_i$
goes from node $n_i$ to node $n_{i+1}$ until the first node $n_1$ is reached, closing the path. 
Cycles are also sometimes called circuits, circles and oriented cycles.} 
has been characterised by rates of labels 
in the transition system \cite{BarylskaBEMP15,BarylskaBEMP16} 
and 
by pseudo-regular expressions \cite{ErofeevBMP16}, 
giving rise to fast specialised synthesis algorithms;
moreover,
a complete enumeration of the shapes 
of synthesisable transition systems is presented in \cite{ErofeevW17}.

In this paper,
we combine the restriction on the number of labels
with the weighted marked graph (WMG) constraint.
In addition,
we study constraints on the existence
of cycles in the LTS: 
when the LTS is \emph{acyclic},
i.e. it does not contain any 
\emph{cycle},
and when it is \emph{cyclic},
i.e. it contains at least one cycle.
In the latter case,
we also study the finite \emph{circular LTS}, 
meaning strongly connected LTSs that contain a unique cycle: 
we investigate the \emph{cyclic solvability} of a word $w$, 
meaning the existence of a Petri net solution
to the finite circular LTS induced by the infinite \emph{cyclic word} $w^\infty$.

An important purpose of studying
such constrained LTSs
is to better understand the relationship 
between LTS decompositions and their solvability by Petri nets.
Indeed,
the unsolvability of simple subgraphs of the given LTS,
typically elementary paths (i.e. not containing any node twice)
and cycles (i.e. closed paths, whose start and end states are equal), 
often induces simple conditions of unsolvability for the entire LTS, 
as highlighted in other works \cite{BarylskaBEMP15,ErofeevBMP16,besdev-ccc25}.
Moreover, 
cycles appear systematically in the reachability graph  
of live and/or reversible Petri nets 
\cite{tcs97,Hujsa2016},
which are used to model various real-world applications,
such as embedded systems \cite{phdhujsa}.\\

{\bf Contributions.}
In this work, 
we study further the links between simple LTS structures and the reachability graph of WMGs,
as follows.\\
First,
we provide a characterisation of the $2$-label 
(i.e. binary) words being 
cyclically solvable by a WMG 
(i.e. WMG-solvable), 
and extend the analysis to finite cyclic LTSs. 
We also tackle the case of infinite cyclic LTSs with $2$ labels.

Then, 
when the number of labels is arbitrary,
we provide 
a geometric characterisation
of the finite, acyclic, WMG-solvable LTS, 
as well as
a general sufficient condition 
of WMG-solvability for a cyclic word,
using a decomposition into specific 
cyclically WMG-solvable binary subwords. 
We prove that this sufficient condition 
becomes a characterisation of cyclic 
WMG-solvability 
for a subclass of the $3$-label words.
Furthermore,
we show,
with the help of two counter-examples,
that this characterisation
does not hold
for words with four or five labels. 

Comparing with \cite{AtaedDEH18},
we refine the results and explanations on 
WMG-solvable, finite, cyclic, binary LTSs
by introducing
Lemma~\ref{Binary-states.lem}
and upgrading Theorem~\ref{cyc.thm},
in Subsection~\ref{circularbinary.subsec}.
We also provide 
the new geometric characterisation
of WMG-solvability for acyclic LTS
with any number of labels, 
and
we sharpen the counter-examples to the 
characterisation of cyclically solvable ternary words in the cases of four and five labels.\\

{\bf Organisation of the paper.}
After recalling classical definitions, notations and properties in Section \ref{Def.sec},
we present the results of WMG-solvability 
for $2$-label words in Section \ref{twolabels.sec}.
Then,
in Section \ref{klabels-acyc.sec},
we propose 
the geometric characterisation
of WMG-solvability for an acyclic LTS
with any number of labels. 
In Section \ref{klabels-cyc.sec},
we develop the general sufficient condition of circular WMG-solvability 
for any number of labels.
In Section \ref{threelabels.sec},
we tackle the ternary case 
and exhibit counter-examples for 
$4$ and $5$ labels. 
Finally,
Section \ref{conclu.sec} 
presents our conclusions and perspectives.

\section{Classical Definitions, Notations and Properties}\label{Def.sec}

{\bf LTSs, sequences and reachability.} 
A {\em labelled transition system with initial state}, 
abbreviated {\em LTS}, 
is a quadruple $\TS=(S,\to,T,\is)$ 
where $S$ is the set of {\em states}, 
$T$ is the set of {\em labels}, 
$\to\,\subseteq(S\times T\times S)$ is the {\em labelled transition relation}, 
and 
$\is\in S$ is the {\em initial state}.\\
A label $t$ is {\em enabled} at $s\in S$, written $s[t\rangle$, if $\exists s'\in S\colon(s,t,s')\in\to$, 
in which case $s'$ is said to be {\em reachable} from $s$ by the firing of $t$, and we write $s[t\rangle s'$.
Generalising to any (firing) sequences $\sigma\in T^*$,
$s[\emptyseq\rangle$ and $s[\emptyseq\rangle s$ are always true;
and $s[\sigma t\rangle s'$, i.e.  $\sigma t$ is {\em enabled} from state $s$ and leads to $s'$,
if there is some $s''$ 
with $s[\sigma\rangle s''$ and $s''[t\rangle s'$.
A state $s'$ is {\em reachable} from state $s$ if $\exists\sigma\in T^*\colon s[\sigma\rangle s'$.
The set of states reachable from $s$ is denoted by $[s\rangle$.\\
 
{\bf Petri nets, reachability and languages	.} 
A (finite, place-transition) 
\emph{weighted Petri net}, or \emph{weighted net},
is a tuple $N=(P,T,W)$ where
$P$ is a finite set of {\em places}, 
$T$ is a finite set of {\em transitions}, with $P\cap T=\es$,
and $W\colon((P\times T)\cup(T\times P))\to\nsymbol$ is a {\em weight} function
giving the weight of each arc.
A \emph{Petri net system}, or \emph{system},
is a tuple $\mathcal S=(N,M_0)$ where 
$N$ is a net
and
$M_0$ is the {\em initial marking},
which is a mapping $M_0\colon P\to\nsymbol$ (hence a member of $\nsymbol^P$)
indicating the initial number of {\em tokens} in each place. 
If $W(x,y)>0$, $y$ is said to be an output of $x$, and $x$ is said to be an input of $y$. 
The {\em incidence matrix} $C$ of the net is the integer $P\times T$-matrix with components 
$C(p,t)=W(t,p)-W(p,t)$.

A transition $t\in T$ is {\em enabled by} a marking $M$, 
denoted by $M[t\rangle$, if for all places $p\in P$, $M(p)\geq W(p,t)$.
A place $p \in P$ is {\em enabled by} a marking $M$ 
if $M(p) \ge W(p,t)$ for every output transition $t$ of $p$,
meaning that it is not an obstacle to enabling transitions. 
If $t$ is enabled at $M$, then $t$ can {\em occur} (or {\em fire}) in $M$, 
leading to the marking $M'$ defined by $M'(p)=M(p)-W(p,t)+W(t,p)$;
we note $M[t\rangle M'$.
A marking $M'$ is {\em reachable} from $M$ if there is a sequence of firings leading
from $M$ to $M'$; if this sequence of firings  defines a sequence of transitions $\sigma\in T^*$,
we note $M[\sigma\rangle M'$. 
The set of markings reachable from $M$ is denoted by $[M\rangle$.
The {\em reachability graph of $\mathcal S$} is the labelled transition system $\mathit{RG}(\mathcal S)$ 
with the set of vertices $[M_0\rangle$, the set of labels $T$, initial state $M_0$ 
and transitions $\{(M,t,M')\mid M,M'\in[M_0\rangle\land M[t\rangle M'\}$.
A system $\mathcal S$ is {\em bounded} if $\mathit{RG}(\mathcal S)$ is finite.

The language of a Petri net system $\mathcal S$ 
is the set $\lge(\mathcal S)=\{\sigma\in T^*\mid M_0[\sigma\rangle\}$. 
These languages are prefix-closed, 
i.e., if $\sigma=\sigma'\sigma''\in\lge(\mathcal S)$, then $\sigma'\in\lge(\mathcal S)$. 
For any language $L\subseteq T^*$, 
we denote by $\pref(L)$ the language formed by its prefixes.\\

{\bf Vectors.}
The {\em support}
of a vector is the set of the indices of its non-null components.
Consider any net $N=(P,T,W)$ with its incidence matrix~$C$.
A {\em T-vector} is an element of $\nsymbol^{T}$; 
it is called {\em prime} if 
the greatest common divisor of its components is one 
(i.e. its components do not have a common non-unit factor). 
A {\em T-semiflow} $\nu$ of the net is a non-null T-vector 
such that $C\cdot\nu=\zero$. 
A T-semiflow is called {\em minimal} when it is prime and its support 
is not a proper superset of the support of any other T-semiflow \cite{tcs97}.\\ 
The {\em Parikh vector} $\Parikh(\sigma)$ 
of a finite sequence $\sigma$ of transitions is a T-vector
counting the number of occurrences of each  transition in $\sigma$,
and the {\em support} of $\sigma$ is the support of its Parikh vector, 
i.e.
$\support(\sigma)=\support(\Parikh(\sigma))=\{t\in T\mid\Parikh(\sigma)(t)>0\}$.\\

{\bf Strong connectedness and cycles in an LTS.}
The LTS $(S,\to,T,\is)$ is said {\em reversible} if,
$\forall s\in[\is\rangle$, 
we have $\is\in[s\rangle$, i.e., 
it is always possible to go back to the initial state;
reversibility implies the strong connectedness of the LTS.\\
A sequence $s[\sigma\rangle s'$ is called a {\em cycle}, or more precisely 
a {\em cycle at (or around) state $s$}, if $s=s'$.
A non-empty cycle $s[\sigma\rangle s$
is called {\em small} 
if there is no non-empty cycle 
$s'[\sigma'\rangle s'$ in $\TS$ with
$\Parikh(\sigma') \lneqq \Parikh(\sigma)$,
meaning that no component of the left vector
is greater than the corresponding component of the right vector,
and
at least one is smaller
(the definition of Parikh vectors 
extending readily to sequences 
over the set of labels $T$ of the LTS).\\
A \emph{circular LTS} is a finite, strongly connected LTS 
that contains a unique cycle; hence, it has the shape of an oriented circle.
The circular LTS \emph{induced by}
a word $w\!=\!w_1\ldots w_k$ is the LTS 
with initial state $s_0$
defined as
$s_0[w_1\rangle s_1 [w_2\rangle s_2 \ldots [w_k\rangle s_0$.

All notions defined for labelled transition systems apply to
Petri nets through their reachability graphs.\\

{\bf Petri net subclasses.} 
A net $N$ is {\em plain} 
if no arc weight exceeds $1$;  
{\em pure} 
if $\forall p\in P\colon(p^\dt{\cap}{}^\dt p)=\es$, 
where $p^\dt=\{t\in T\mid W(p,t){>}0\}$ and ${}^\dt p=\{t\in T\mid W(t,p){>}0\}$;
CF ({\it choice-free} \cite{crespi-mandrioli-75,tcs97}) 
or ON (place-output-nonbranching \cite{besdev-ccc25})
if $\forall p\in P\colon|p^\dt|\leq1$; 
a  WMG ({\em weighted marked graph} \cite{WTS92}) 
if $|p^\dt|\leq 1$ and $|{}^\dt p|\leq 1$ for all places $p\in P$.
The latter form a subclass of the choice-free nets;
other subclasses are {\em marked graphs} \cite{chep},
which are plain with $|p^\dt|=1$ and $|{}^\dt p|=1$ for each place $p\in P$, 
and {\em T-systems} \cite{DesEsp}, which are plain  
with 
$|p^\dt|\leq 1$ and $|{}^\dt p|\leq 1$ 
for each place $p\in P$.\\

{\bf Isomorphism and solvability.} 
Two LTS $\TS_1=(S_1,\to_1,T,s_{01})$ and $\TS_2=(S_2,\to_2,T,s_{02})$ are 
isomorphic if there is a bijection $\zeta\colon S_1\to S_2$ with 
$\zeta(s_{01})=s_{02}$ and 
$(s,t,s')\in\to_1\,\Leftrightarrow(\zeta(s),t,\zeta(s'))\in\to_2$, for all $s,s'\in S_1$.\\
If an LTS $\TS$ is isomorphic to $\mathit{RG}(\mathcal S)$ where $\mathcal S$ is a system,
we say that $\mathcal S$ {\em solves} $\TS$.
Solving a word $w=\ell_1 \ldots \ell_k$ amounts to
solve the acyclic LTS defined by the single path $\is [\ell_1\rangle s_1 \ldots [\ell_k\rangle s_k$. 
A finite word $w$ is \emph{cyclically solvable}
if the circular LTS induced by $w$ is solvable.
A LTS is WMG-solvable if a WMG solves it.\\

{\bf Other classical notions.}  An LTS $\TS = (S,\rightarrow, T,\is)$ is \emph{fully reachable} if $S=[\is\rangle$.
It is  \emph{forward deterministic} if 
$s[t\rangle s'\land s[t\rangle s''\impl s'=s''$, 
and
 \emph{backward deterministic} if $s'[t\rangle s\land s''[t\rangle s\impl s'=s''$.\\
A system $\mathcal S$ is  \emph{forward persistent} if,
for any reachable markings $M,M_1,M_2$, 
$(M[a\rangle M_1 \land M[b\rangle M_2 \land a\neq b) \impl M_1[b\rangle M' \land M_2[a\rangle M'$
 for a reachable marking $M'$;
 it is  \emph{backward persistent} if,
for any reachable markings $M,M_1,M_2$, 
$(M_1[a\rangle M \land M_2[b\rangle M \land a\neq b) \impl M'[b\rangle M_1 \land M'[a\rangle M_2$ 
for a reachable marking $M'$.\\

Next, we recall classical properties of Petri net reachability graphs.

\begin{proposition}[Classical Petri net properties]\label{classic.prop} 
If $\mathcal S$ is a Petri net system: \\
$-$ $\mathit{RG}(\mathcal S)$ is a fully reachable LTS.\\
$-$ $\mathit{RG}(\mathcal S)$ is forward deterministic
and
backward deterministic.
\end{proposition}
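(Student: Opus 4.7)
The plan is to derive each of the three assertions directly from the definitions of the firing rule and of $\mathit{RG}(\mathcal S)$ recalled just above the proposition, so the whole argument reduces to unfolding definitions; there is no serious obstacle.

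First, for full reachability I would note that by construction the state set of $\mathit{RG}(\mathcal S)$ is $[M_0\rangle$, the set of markings reachable from the initial marking $M_0$, and the initial state of $\mathit{RG}(\mathcal S)$ is $M_0$ itself. Hence every vertex of $\mathit{RG}(\mathcal S)$ is reachable from its initial state by some firing sequence, which is exactly the condition $S = [\is\rangle$ defining full reachability.

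Second, for forward determinism I would appeal to the fact that the firing rule defines the successor marking as a function of the current marking and the fired transition. Concretely, if $M[t\rangle M'$ in $\mathcal S$, then for every place $p \in P$, $M'(p) = M(p) - W(p,t) + W(t,p)$. Thus assuming $M[t\rangle M_1$ and $M[t\rangle M_2$, the same formula yields $M_1(p) = M_2(p)$ for every $p$, so $M_1 = M_2$ as mappings in $\nsymbol^P$, hence as vertices of $\mathit{RG}(\mathcal S)$.

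Third, backward determinism follows from the symmetric observation: inverting the firing rule, if $M_1[t\rangle M$ and $M_2[t\rangle M$, then for each place $p$, $M_1(p) = M(p) + W(p,t) - W(t,p) = M_2(p)$, so again $M_1 = M_2$. The only point that deserves a brief mention is that this equality takes place in $\nsymbol^P$, so identifying markings as mappings with vertices of $\mathit{RG}(\mathcal S)$ is legitimate, which is built into the construction of the reachability graph and needs no further argument.
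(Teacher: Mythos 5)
Your proof is correct and is exactly the standard unfolding of the definitions; the paper itself offers no proof of this proposition, merely recalling it as a classical fact, so your argument supplies precisely the routine justification that the authors take for granted. The only point worth making fully explicit in the full-reachability part is that every intermediate marking along a firing sequence from $M_0$ also lies in $[M_0\rangle$, so the corresponding path is genuinely present in $\mathit{RG}(\mathcal S)$ as a path of the LTS, not just as a sequence of firings in the net.
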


For the subclass of WMGs, we have the following dedicated properties,
extracted from Proposition $4$, Lemma $1$, Theorem $2$ and Lemma $2$ in \cite{DH2018}.

\begin{proposition}[Properties of WMG] \label{WMG.prop} 
If $\mathcal S=(N,M_0)$ is a WMG system:\\ 
$-$ It is forward  persistent and backward persistent.\\
$-$ If $N$ is connected and has a T-semiflow $\nu$,	
then there is a unique minimal one $\pi$, with support $T$,
and $\nu = k \cdot \pi$ for some positive integer $k$. 
Moreover, 
if there is a non-empty cycle in $\mathit{RG}(\mathcal S)$, 
there is one with Parikh vector $\pi$ in $\mathit{RG}(\mathcal S)$ around each reachable marking 
and  $\mathit{RG}(\mathcal S)$ is reversible.
If there is no cycle, 
all the paths starting from some state $s$ and reaching some state $s'$ have the same Parikh vector. 
\end{proposition}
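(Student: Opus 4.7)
The plan is to prove the four statements in order, each leveraging the defining WMG constraint that $|p^{\dt}|\le 1$ and $|{}^{\dt}p|\le 1$ for every place $p$.

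First, I would establish persistence directly from the structural hypothesis. For forward persistence, if $M[a\rangle M_1$ and $M[b\rangle M_2$ with $a\neq b$, then no place can be an input of both $a$ and $b$ (otherwise $|p^{\dt}|\ge 2$), so firing $a$ consumes no token needed by $b$, and $b$ remains enabled at $M_1$; symmetrically $a$ is enabled at $M_2$, and by linearity of the firing rule both orders reach the same marking $M'=M+C(\cdot,a)+C(\cdot,b)$. For backward persistence, the analogous argument uses $|{}^{\dt}p|\le 1$ to ensure that $W(a,p)$ and $W(b,p)$ cannot be simultaneously positive, so the candidate pre-marking $M'(p)=M(p)+W(p,a)-W(a,p)+W(p,b)-W(b,p)$ is non-negative and enables both $a$ and $b$.

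Second, for the T-semiflow statement, the row of the incidence matrix indexed by a place $p$ contains at most one positive entry (at the unique input transition, if any) and at most one negative entry (at the unique output transition, if any). The equation $C\nu=\zero$ with $\nu\ge 0$ then amounts to a collection of scalar equalities $W(t,p)\,\nu(t)=W(p,t')\,\nu(t')$, one per place joining its unique input $t$ and output $t'$. Connectedness of $N$ propagates these equalities through the bipartite underlying graph, forcing every coordinate of any non-null $\nu$ to be positive and determined up to a single rational scalar; the primitive representative is the unique minimal T-semiflow $\pi$ with $\support(\pi)=T$, and every T-semiflow is of the form $k\pi$ for some positive integer $k$.

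Third, any non-empty cycle $M[\sigma\rangle M$ satisfies $C\cdot\Parikh(\sigma)=\zero$, so by the previous item $\Parikh(\sigma)=k\pi$ and $\sigma$ involves every transition. To produce a cycle of Parikh vector exactly $\pi$ around an arbitrary reachable $M'$, I would pick a witness $M_0[\rho\rangle M'$, iterate $\sigma$ around $M_0$ enough times so that the total Parikh vector $m\pi$ dominates $\Parikh(\rho)$ componentwise, and invoke a Keller-style completion (available in any forward-persistent system) to produce a sequence $\tau$ enabled at $M'$ with $\Parikh(\rho\tau)=m\pi$. Then $M'[\tau\rangle M_0$, which yields reversibility; minimality of $\pi$ combined with persistence allows extracting from $\tau$ a sub-cycle of Parikh exactly $\pi$ based at $M'$. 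For the acyclic item, if two paths $s[\sigma_1\rangle s'$ and $s[\sigma_2\rangle s'$ had distinct Parikh vectors, the same Keller completion would yield extensions with \emph{disjoint-support} Parikh vectors from $s'$; iterating this and using backward persistence eventually realises a nonzero kernel vector of $C$ from a reachable marking, producing a non-empty cycle and contradicting acyclicity.

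The main obstacle is the cyclic item: obtaining a cycle of Parikh vector exactly $\pi$ (not merely $k\pi$) around every reachable marking, while simultaneously deriving reversibility, requires a careful combination of persistence, the Keller completion, and the minimality of $\pi$ so as not to inflate the extracted Parikh vector. The acyclic case rests on the same Keller-based machinery; its subtlety lies in handling incomparable Parikh vectors, where the disjoint-support property delivered by the completion, combined with full support of $\pi$, is exactly what forces a cycle.
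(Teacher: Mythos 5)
The paper does not actually prove this proposition: it is imported verbatim from Proposition~4, Lemma~1, Theorem~2 and Lemma~2 of \cite{DH2018}, so there is no in-paper argument to compare yours against. Judged on its own, your reconstruction is sound for two of the four items: the diamond argument for forward persistence (no place is an input of both $a$ and $b$ since $|p^\dt|\le 1$, so firing one cannot disable the other) and the propagation argument for the T-semiflow item (each row of $C$ has at most one positive and one negative entry, and connectedness forces any kernel vector to have full support and a single sign, whence the unique minimal $\pi$ and $\nu=k\cdot\pi$) are correct and are essentially the standard proofs.

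The gaps are in the remaining items. First, backward persistence as defined in this paper requires the common predecessor $M'$ to be a \emph{reachable} marking; your construction only exhibits a non-negative marking that enables both $a$ and $b$ and fires into $M_1$ and $M_2$, and its reachability from $M_0$ does not follow from the local weight computation --- it needs a separate argument (in the cited work it is obtained via Keller's theorem applied to firing sequences reaching $M_1$ and $M_2$). Second, for the cyclic item the decisive step --- extracting a cycle of Parikh vector \emph{exactly} $\pi$, rather than some $k\cdot\pi$, around an arbitrary reachable marking --- is asserted (``minimality of $\pi$ combined with persistence allows extracting a sub-cycle'') but never carried out; this is precisely the non-trivial content of the cited Theorem~2, and as written your sketch also conflates a cycle around some reachable marking with one around $M_0$. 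Third, for the acyclic item your mechanism is the wrong one: after a Keller completion the two residues have disjoint supports, and ``iterating until a nonzero kernel vector is realised'' has no visible termination argument. The clean route is to note that $s[\sigma_1\rangle s'$ and $s[\sigma_2\rangle s'$ give $\Parikh(\sigma_1)-\Parikh(\sigma_2)\in\ker C$, and your own analysis of item two shows that for a connected WMG $\ker C$ is at most one-dimensional with a strictly positive generator; hence distinct Parikh vectors would be \emph{comparable}, one residue would be empty, and Keller's theorem would immediately yield a non-empty cycle at $s'$, contradicting acyclicity. The incomparable case your sketch struggles with is therefore vacuous, but your proposal neither observes this nor closes the case it raises.
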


To simplify our reasoning in the sequel,  
we introduce the following notation,
which captures some of the behavioural properties satisfied by WMG
(Propositions \ref{classic.prop} and \ref{WMG.prop}).
We denote by 
\vspace*{-3mm}
\begin{itemize}[label={--}]
\item \text{\bf b} (for basic) the set of properties: 
 forward and backward deterministic, 
forward and backward persistent, totally reachable;
\item \text{\bf c} (for cyclic) the property: there is a small cycle 
whose Parikh vector is prime with support $T$. 
\end{itemize}

A synthesis procedure does not necessarily lead to a connected solution.
However,
the technique of decomposition into prime factors 
described in \cite{dev-ACSD16,devacta17}
can always be applied first,
so as to handle connected partial solutions and recombine them afterwards. 
Hence,
in the following,
we focus on connected WMGs, without loss of generality.
In the next section,
 we consider the synthesis problem of WMG with exactly two different labels.

\section{Synthesis of a WMG from a Cyclic Binary LTS}\label{twolabels.sec} 

In this section, 
we provide conditions for the WMG-solvability of $2$-label cyclic LTS. 
In Subsection \ref{circularbinary.subsec},
we investigate the WMG-solvability of a finite cyclic LTS:
first when it is circular, then without this
constraint. 
In Subsection~\ref{infiniterev.subsec},
we investigate the WMG-solvability of an 
infinite cyclic binary LTS.

\subsection{WMG-solvable Finite Cyclic Binary LTS} \label{circularbinary.subsec}

In this subsection, 
we first consider any circular LTS with only two different labels.
Each such LTS is defined by a word $w\in\{a,b\}^*$, 
corresponding to the labels encountered by firing the circuit once from $\is$, leading back to $\is$.
Changing the initial state in this LTS amounts to rotate $w$.
Clearly,
each such LTS satisfies property {\bf b},
but is not always WMG- (nor even Petri net-) 
solvable.

The next results consider circuit Petri nets as represented in Fig. \ref{solcyc.fig}, 
where places are named following the direction of the arcs, 
e.g. $p_{a,b}$ is the output place of $a$
and the input place of $b$.

\begin{figure}[htbp]
\centering
\begin{tikzpicture}[scale=0.9]
\node[place,tokens=0](p)at(2,0.85)[label=below:$p_{a,b}$]{$i$};
\node[place,tokens=0](p')at(2,-0.85)[label=above:$p_{b,a}$]{$j$};
\node[transition](a)at(0,0){$a$};
\node[transition](b)at(4,0){$b$};
\draw(a)[]edge[-latex]node[above left]{$m$}(p);
\draw(p)[]edge[-latex]node[above right]{$n$}(b);
\draw(p')[]edge[-latex]node[below left]{$m$}(a);
\draw(b)[]edge[-latex]node[below right]{$n$}(p');
\end{tikzpicture}
\caption{A generic WMG solving a finite circular LTS induced by a word $w$ over the alphabet $\{a,b\}$,
whose initial marking $(i,j)$ depends on the given solvable LTS.
We assume that $\Parikh(w) = (n,m)$ is prime.
}
\label{solcyc.fig}
\end{figure}
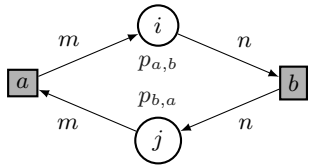

\begin{theorem}[Cyclically WMG-solvable binary words] 
\label{circularbinarywords.thm}
Consider a finite binary word $w$ over
the alphabet $\{a,b\}$,
with $\Parikh(w) = (n,m)$ 
and $n\leq m$,
the case $m\leq n$ being handled symmetrically.
Then,
$w$ is cyclically solvable 
if and only if 
$gcd(n,m) = 1$
and $w$ is a rotation of the word
$w' = ab^{m_0} \ldots ab^{m_{n-1}}$,
where the sequence $m_0, \ldots, m_{n-1}$
is the sequence of quotients
in the following
system of equalities, 
with $r_0 = 0$:
$$
\left\{
\begin{array}{l}
	r_0 +  m= m_0 \cdot  n + r_1, \textrm{~where } 0 \le r_1 <  n\\
	r_1 +  m = m_1 \cdot  n+ r_2, \textrm{~where } 0 \le r_2 <  n\\
	\ldots \\
	r_{n-1} +  m = m_{n-1} \cdot n.
\end{array}
\right.
$$
Moreover,
$m+n-1$ tokens
are necessary and sufficient to
solve the word cyclically.
\end{theorem}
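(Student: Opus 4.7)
The plan is to split the proof into necessity (any cyclically WMG-solvable word has the claimed form) and sufficiency (the displayed construction works), and to perform both parts by first pinning down the essential structure of any WMG solution.

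First, I would observe that if a WMG system $\mathcal S = (N, M_0)$ cyclically solves $w$, then $\Parikh(w) = (n,m)$ is a T-semiflow of $N$. Proposition \ref{WMG.prop} then supplies a unique minimal prime T-semiflow $\pi$ with support $\{a,b\}$ and, around $M_0$, a small cycle of Parikh vector $\pi$. Since the circular LTS contains only one cycle, that small cycle must coincide with $w$ itself, forcing $\gcd(n,m)=1$. Exploiting the WMG constraint (at most one input and one output transition per place) together with the T-semiflow equation $n \cdot C(p,a) + m \cdot C(p,b) = 0$, I would then show that every non-redundant, non-self-loop place is either $p_{a,b}$ with weights $(m,n)$ or $p_{b,a}$ with weights $(n,m)$, and that both are required for the LTS to be strongly connected. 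Hence the essential net is the one of Figure \ref{solcyc.fig}.

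Next, I would use the fact that $i+j$ is invariant under firing in this net, so reachable markings are determined by a single coordinate. Matching the $m+n$ states of the circular LTS forces $i+j \ge m+n-1$; conversely, $i+j \ge m+n$ would allow a marking in which both $a$ (needing $j \ge m$) and $b$ (needing $i \ge n$) are enabled, contradicting the outdegree $1$ of every state in the circular LTS. Thus $i+j = m+n-1$, yielding the claimed token count. At every reachable marking exactly one of $a,b$ is enabled ($a$ iff $j \ge m$, $b$ iff $j < m$), so the firing sequence from any starting marking is deterministic. Starting from $(i_0,j_0)=(0, m+n-1)$, a straightforward induction identifies the content of $p_{a,b}$ just before the $(i{+}1)$-th firing of $a$ with the remainder $r_i$, and the number of $b$'s fired between the $i$-th and $(i{+}1)$-th firings of $a$ with the quotient $m_i$ of the displayed Euclidean-type recurrence; this produces exactly $w'$. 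Other starting markings give rotations of $w'$, proving necessity; the same computation, read forward, proves sufficiency.

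The main obstacle, I expect, is the rigidity argument for the net structure — in particular ruling out self-loop places, proper multiples of the two canonical circuit places, and similar ``parasites'' that satisfy the T-semiflow equation but could a priori perturb the enabling pattern. A secondary difficulty is verifying cleanly that the $m+n$ markings visited along the canonical firing sequence are pairwise distinct: any repetition would yield a strictly shorter non-empty cycle with a T-semiflow strictly smaller than $(n,m)$, contradicting the minimality of $(n,m)$ granted by $\gcd(n,m)=1$ and Proposition \ref{WMG.prop}; but writing this out carefully, and checking that the rotation of $w'$ produced matches the one induced by the chosen initial state, is the real bookkeeping work.
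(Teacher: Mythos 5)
Your proposal follows essentially the same route as the paper's proof: reduce any solution to the two-place circuit of Fig.~\ref{solcyc.fig}, pin the token count to $m+n-1$ by the two opposing constraints (enough states to realise the circle versus never enabling both transitions at once), and then read off the quotients $m_i$ and remainders $r_i$ from the deterministic firing simulation, with rotations accounting for other choices of initial state. The only cosmetic differences are that the paper delegates the net-rigidity step to earlier work and establishes distinctness of the $m+n$ visited markings via the residues $(i\cdot m)\bmod n$ rather than via minimality of the T-semiflow; both variants are sound.
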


\begin{proof}  
From Proposition \ref{WMG.prop},
for a connected WMG solution to exist,
the Parikh vector of the word must be the minimal T-semiflow $\mu=(n,m)$ with support $T=\{a,b\}$,
which is prime by definition,
thus $gcd(m,n)=1$.
A variant of this problem has been studied in \cite{binary16},
section 6.
Basing of this previous study,
we highlight the following facts,
leading to the claim.
If a solution exists, then:
\begin{itemize}[label={--}]
\item there exists a WMG solution 
as pictured in Fig. \ref{solcyc.fig},
in which
each firing preserves
the number of tokens;
thus,
denoting by $M_s(p)$ 
the marking of place $p$ at state $s$,
the sum $M_s(p_{a,b})+M_s(p_{b,a})$
is the same for all states.
\item Consider any two different  
reachable markings $M'$ and $M''$,
then,
from the above,
$M'(p_{a,b}) \neq M''(p_{a,b})$
and 
$M'(p_{b,a}) \neq M''(p_{b,a})$.
\item $M_s(p_{a,b})+M_s(p_{b,a})=m+n-1$.
Indeed, 
with more tokens, 
a reachable marking enables both $a$ and $b$, which is not allowed by the given LTS; 
with fewer tokens, 
a deadlock is reached, 
i.e. a marking that enables no transition.
\item For each $i$, $m_i\in\{\lfloor m/n \rfloor,\lceil m/n\rceil\}$,
there are $(m\!\!\!\mod\!n)$ $b$-blocks of size $(\lfloor m/n \rfloor\!+\!1)$, the other ones have size $\lfloor m/n \rfloor$. 
\end{itemize}

Let us start from the state $s$ such that $M_s(p_{a,b})=0$ and $M_s(p_{b,a})=m+n-1$, 
with
$r_0=0$. 
We denote by $r_i$ the number of tokens in $p_{a,b}$ at the $(i+1)$-th visited state that enables $a$.
The value
$m_0$ is the maximal number of $b$'s that can be fired after the first $a$, and then $r_1$ tokens remain in $p_{a,b}$;
hence, 
there are $m+n-1-r_1$ tokens in  $p_{b,a}$ (which is at least $m$) before the second $a$. 
After the second $a$, 
we have $m+r_1$ tokens in $p_{a,b}$ and we fire $m_1$ $b$'s.
We iterate the process until the initial state is reached. 

In the state enabling the $(i+1)$-th $a$,
there are $(i\cdot m)\mod n$ tokens in $p_{a,b}$,
implying that $r_n$ equals $0$ 
when the initial state is reached again.
In between, 
we visited all the values from $0$ to $n-1$ for the $r_i$'s:
indeed, 
if $(i\cdot m)\mod n=(j\cdot m)\mod n$ for $0\leq i<j<n$, we have $((j-i)\cdot m)\mod n=0$, or $((j-i)\cdot m=k\cdot n$ 
for some $k$; 
but then $n$ must divide $j-i$ since $m$ and $n$ are relatively prime, which is only possible if $i=j$.

Finally,
some rotation
of $w'$ leads to $w$ and to the associated value of $r_0$.
\qed \end{proof}

An example is given in Fig. \ref{binary-circuit.fig},
where the elements of the sequence $m_0, \ldots, m_{n-1}$ are put in bold in the system on the left.

\begin{figure}[htbp]
\centering
\begin{minipage}{0.4\linewidth}
$0+21={\bf 2}.8+5$\\
$5+21={\bf 3}.8+2$\\
$2+21={\bf 2}.8+7$\\
$7+21={\bf 3}.8+4$\\
$4+21={\bf 3}.8+1$\\
$1+21={\bf 2}.8+6$\\
$6+21={\bf 3}.8+3$\\
$3+21={\bf 3}.8+0$.\\
\end{minipage}
\raisebox{-1cm}{
\begin{tikzpicture}[scale=1]
\node[place,tokens=0](p1)at(2,1)[label=above:$p_1$]{$28$};
\node[place,tokens=0](p2)at(2,3)[label=below:$p_2$]{};
\node[transition](a)at(1,2){$a$};
\node[transition](b)at(3,2){$b$};
\draw(p1)[]edge[-latex,bend left=40]node[below left]{$21$}(a);
\draw(a)[]edge[-latex,bend left=40]node[above left]{$21$}(p2);
\draw(p2)[]edge[-latex,bend left=40]node[above right]{$8$}(b);
\draw(b)[]edge[-latex,bend left=40]node[below right]{$8$}(p1);
\end{tikzpicture}
}

\caption{This system solves the word $w = ab^{2}ab^{3}ab^{2}ab^{3}ab^{3}ab^{2}ab^{3}ab^{3}$ cyclically.
}
\label{binary-circuit.fig}
\end{figure}
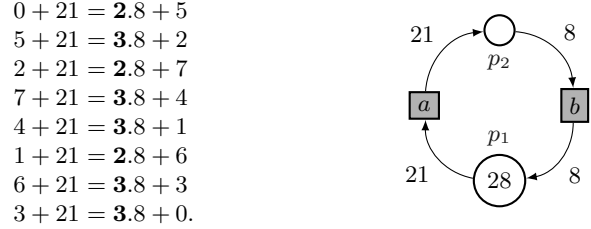

{\bf Complexity.} 
The number of operations 
to determine the sequence of $m_i$'s 
is linear in the smallest weight $n$, 
i.e. also in the minimal number of 
occurrences of a label.
In comparison, the previous algorithm of \cite{binary16}
checks a quadratic number of subwords.\\

The next lemma characterises
the set of states reachable 
in any WMG 
whose underlying net is the one pictured
in Fig. \ref{solcyc.fig}
and
whose initial marking
contains at least $n+m-1$ tokens.

\begin{lemma}[Reachable states w.r.t. 
the number of tokens]  
\label{Binary-states.lem}
Let $N$ be a binary WMG
as in Fig. \ref{solcyc.fig},
such that
$\mu=(n,m) \ge \one$ 
with $\gcd(n,m)=1$.
Then
for each positive integer $k \ge m+n-1$,
and each marking $M_0$ for $N$,
the following properties are equivalent:\\
$1)$
$M_0$
contains exactly $k$ tokens;\\
$2)$
RG$((N,M_0))$ contains exactly $k+1$ states;\\
$3)$
the set of states of RG$((N,M_0))$
is
$\{(k_{ab}, k_{ba}) \in \mathbb{N}^2 \,|\,  k_{ab} + k_{ba} = k\}$.
\end{lemma}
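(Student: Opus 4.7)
The plan is to establish the cycle of implications $(1)\Rightarrow(3)\Rightarrow(2)\Rightarrow(1)$, with all the substance in the first arrow. I first dispose of the easy directions. For $(3)\Rightarrow(2)$, the set in $(3)$ is visibly $\{(0,k),(1,k-1),\dots,(k,0)\}$, of cardinality $k+1$. For $(2)\Rightarrow(1)$, I use token conservation: firing $a$ moves $m$ tokens from $p_{b,a}$ to $p_{a,b}$ and firing $b$ moves $n$ tokens the other way, so every reachable marking has the same total $k'$ as $M_0$. Thus $\mathit{RG}((N,M_0))\subseteq\{(i,k'-i):0\leq i\leq k'\}$, whence $k+1=|\mathit{RG}|\leq k'+1$, i.e.\ $k'\geq k\geq m+n-1$. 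Applying $(1)\Rightarrow(3)$ to $M_0$ viewed with its $k'$ tokens yields $|\mathit{RG}|=k'+1$, forcing $k'=k$.

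For the main implication $(1)\Rightarrow(3)$, assume $M_0$ has $k\geq m+n-1$ tokens. Conservation places $\mathit{RG}((N,M_0))\subseteq\{(i,k-i):0\leq i\leq k\}$. I would next check deadlock-freeness: at any reachable $(k_{ab},k_{ba})$ with $k_{ab}+k_{ba}=k$, if $k_{ab}<n$ then $k_{ba}=k-k_{ab}>k-n\geq m-1$, so $k_{ba}\geq m$ and $a$ is enabled; the symmetric case enables $b$. To show the reverse inclusion, I fix a target $y\in\{0,\dots,k\}$ and seek a firing sequence from $M_0=(k_{ab}^0,k-k_{ab}^0)$ to $(y,k-y)$. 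Since $\gcd(m,n)=1$, by B\'ezout I pick integers $\alpha,\beta$ with $\alpha m-\beta n=y-k_{ab}^0$, and shift by multiples of the T-semiflow $(n,m)$ (which preserves $\alpha m-\beta n$) to arrange $\alpha,\beta\geq 0$. I then fire greedily: fire $a$ whenever the $a$-quota $\alpha$ is not yet met and $a$ is enabled; otherwise fire $b$ under the analogous condition; otherwise halt.

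The main obstacle is showing that this greedy schedule always halts with both quotas exhausted, so the target is reached. I would rule out the bad halting configurations by arithmetic. Suppose greedy halts with the $b$-quota met but $\alpha'<\alpha$ $a$-firings performed and $a$ disabled; then the current marking satisfies $k_{ab}^{\mathrm{cur}}=k_{ab}^0+\alpha' m-\beta n>k-m$ (because $k_{ba}^{\mathrm{cur}}<m$), whereas $y=k_{ab}^0+\alpha m-\beta n\leq k$; subtracting yields $(\alpha-\alpha')m<m$, contradicting $\alpha'<\alpha$. The symmetric case (the $a$-quota met, $\beta'<\beta$, $b$ disabled) uses $y\geq 0$ to give $(\beta-\beta')n<n$, contradicting $\beta'<\beta$; and the case in which both transitions are disabled is excluded by the deadlock-freeness established above. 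Hence greedy terminates with both quotas filled, reaching $(y,k-y)$, which completes $(1)\Rightarrow(3)$ and thereby the lemma.
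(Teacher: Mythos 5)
Your proof is correct, and it takes a genuinely different route from the paper's. The paper proves $(1)\Rightarrow(3)$ by bootstrapping: the base case $k=m+n-1$ is imported from the proof of Theorem~\ref{circularbinarywords.thm} (the circuit solution of the cyclic word already visits all $m+n$ token distributions), the case $k=\ell\cdot(m+n-1)$ is handled by writing $M_0$ as a sum of $\ell$ markings and shuffling the firing sequences available to each summand, and general $k$ is obtained by splitting off a remainder $(u',v')$ and exhibiting, for each leftover token, a specific reachable marking from which one extra firing of $a$ or $b$ produces a new distribution. Your argument is instead self-contained and uniform in $k$: token conservation confines the state space to one diagonal, the bound $k\ge m+n-1$ gives deadlock-freeness, and a B\'ezout decomposition $\alpha m-\beta n=y-k_{ab}^0$ with $\alpha,\beta\ge 0$ (after shifting by the T-semiflow $(n,m)$) combined with the greedy quota-respecting scheduler reaches any prescribed distribution $(y,k-y)$; the arithmetic elimination of the two bad halting configurations is exactly the right closing step, and it is order-independent, so the priority in the scheduler is harmless. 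What your approach buys is independence from Theorem~\ref{circularbinarywords.thm} and a fully explicit reachability certificate, replacing the paper's somewhat informal shuffle/monotonicity steps; what the paper's approach buys is reuse of machinery already established and a layered picture of how added tokens enlarge the reachability graph. Your $(2)\Rightarrow(1)$ closure via $k'\ge k\ge m+n-1$ and a second application of $(1)\Rightarrow(2)$ is also sound and mirrors the paper's concluding remark.
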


\begin{proof}
We first prove that $1)$ implies $2)$ and $3)$.

The case $k=m+n-1$
follows from the proof of
Theorem~\ref{circularbinarywords.thm}:
all the markings
of the form $(k_{ab},m+n-1-k_{ab})$
with $k_{ab} \in [0,m+n-1]$
are reachable.
By the preservation of the total number of
tokens through firings,
these markings are distinct
and their amount is thus $n+m$,
proving
the claim for $k=m+n-1$.

In the following,
let us denote by $S_\bot$
the set of all these markings (i.e. with exactly $m+n-1$ tokens).

The case $k = \ell\cdot (m+n-1)$ 
for some positive integer $\ell$ 
is deduced similarly:
denoting $M_0$ as any sum
of $\ell$ markings $M_1 \ldots M_\ell$
such that each $M_i$ corresponds to some distribution
of $k$ tokens over the two places,
firing sequences are allowed in each $(N,M_i)$
that lead to all the markings of 
$S_\bot$,
i.e. $S_\bot$ is the set of states of RG($(N,M_i)$) for each $i$
and all these RG's differ only by the choice of the initial state.
All these sequences,
obtained from all $i \in [1,\ell]$,
are allowed independently (sequentially as well as in a shuffle) 
 in $(N,M_0)$.
Thus,
all the markings of the form 
$(k_{ab},k-k_{ab})$
with $k_{ab} \in [0,k]$
are mutually reachable,
describing $k+1$ distinct markings,
which correspond to all the possibilities
of distributing $k$ tokens over
the two places.

Now,
let us consider $k>m+n-1$,
denoting the initial marking as
$M_0 = (u+u',v+v')$ such that
$u+v = \ell\cdot (m+n-1)$, 
$\ell \in \mathbb{N}_{>0}$,
and $u',v'$ are non-negative integers with
$m+n-1 > u'+v' \ge 1$.
From the above,
all the markings of the form $M+(u',v')$,
where $M$ belongs to $RG(N,(u,v))$
are reachable from $M_0$,
describing $u+v+1$ distinct markings.
Other markings can be reached by firing
the tokens of $u'$ and $v'$:
for each $x \in [0;u'-1]$
and each $y \in [0;v'-1]$,
the markings $(x+n,k-x-n)$ and $(k-y-m,y+m)$ are reachable,
from which we may fire $b$ and $a$, respectively, leading to markings
$(x,k-x)$ and $(k-y,y)$, 
and all these markings are distinct.
Thus, 
we reach at least $u+v+1+u'+v'=k+1$
distinct markings,
which describe all the 
possible distributions of $k$.

We deduce that $1)$ implies $2)$ and $3)$.
Now,
assuming $2)$, and from the reasoning above,
$M_0$ cannot have strictly less nor strictly more
than $k$ tokens, implying $1)$.
Finally, 
$3)$ describes all the $k+1$ distributions of the $k$ tokens
over the two places,
all of these markings being mutually reachable by
reversibility,
hence $1)$ and $2)$ are obtained.
\qed
\end{proof}

In Theorem \ref{circularbinarywords.thm}, 
we provided a criterion for the cyclic solvability of a given word.
In the next theorem,
we abstract the word by a Parikh vector,
which provides less accurate information on the behaviour of the process.
This result investigates the possible 
WMG-solvable LTS
for this vector.

\begin{theorem}[WMG-solvable reversible binary LTS] 
\label{cyc.thm}
Let us consider
$\mu=(n,m) \ge \one$ 
such that $\gcd(n,m)=1$,
and a positive integer $k$.  
Up to isomorphism and the choice of the initial state, 
when $k \geq n+m$,  
there exists a single finite WMG-solvable LTS 
$(S,\to,\{a,b\},\is)$
that satisfies \text{\bf b}, 
\text{\bf c} and $|S| = k$,  
and that contains a small cycle whose Parikh vector is $\mu$.
No such WMG-solvable LTS exists when 
$k < n+m$.  
In the particular case of
$S=\{0,1,\ldots,m+n-1\}$, we have (up to isomorphism) $\to=\{(i,a,i+m)|i,i+m\in S\}\cup\{(i,b,i-n)|i,i-n\in S\}$.
\end{theorem}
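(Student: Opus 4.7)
The plan is to decompose the statement into three parts: existence (with the explicit form of $\to$) for every $k\geq n+m$, non-existence for $k<n+m$, and uniqueness up to isomorphism and the choice of initial state. For existence, I would instantiate the WMG $N$ of Fig.~\ref{solcyc.fig} with an initial marking $M_0$ of total weight $k-1\geq m+n-1$ (for instance $M_0(p_{a,b})=0$ and $M_0(p_{b,a})=k-1$). By Lemma~\ref{Binary-states.lem} the reachability graph has exactly $k$ markings, bijectively parametrised by the number of tokens in $p_{a,b}$. Identifying a marking with this coordinate yields $S=\{0,\ldots,k-1\}$, and the enabling conditions of $a$ (namely $k-1-i\geq m$, equivalently $i+m\in S$) and of $b$ (namely $i\geq n$, equivalently $i-n\in S$) produce exactly the transition relation of the statement. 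Property~\textbf{b} then follows from Proposition~\ref{classic.prop}, while~\textbf{c} and the existence of a small cycle of Parikh vector $\mu$ around every reachable marking are given by Proposition~\ref{WMG.prop}.

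For non-existence when $k<n+m$, I would argue that in any LTS satisfying \textbf{b} and containing a small cycle $s[\sigma\rangle s$ of prime Parikh vector $\mu=(n,m)$, the $n+m$ intermediate states of $\sigma$ must be pairwise distinct; otherwise $\sigma$ would split into two non-empty subcycles with Parikh vectors $\mu_1,\mu_2$ summing to $\mu$, and primality of $\mu$ would force one of $\mu_1,\mu_2$ to be strictly smaller than $\mu$ componentwise, contradicting smallness. Hence $|S|\geq n+m$ and no such LTS can exist below this threshold.

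For uniqueness, I would take any WMG-solvable LTS $\mathcal{L}$ satisfying the assumptions, fix a connected WMG solution $(N,M_0)$, and invoke Proposition~\ref{WMG.prop} to conclude that $\mathcal{L}$ is reversible and that any two paths between the same states have Parikh vectors differing by an integer multiple of $\mu$. I would then define $\phi\colon S\to\mathbb{Z}$ by $\phi(s)=m\cdot\Parikh(\sigma)_a - n\cdot\Parikh(\sigma)_b$ for any $\is[\sigma\rangle s$; this is well-defined since $(i,j)\mapsto mi-nj$ vanishes on $\mathbb{Z}\mu$, and $a$-transitions shift $\phi$ by $+m$ while $b$-transitions shift by $-n$. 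Injectivity of $\phi$ follows from a Keller-style argument: if $\phi(s)=\phi(s')$ with witnesses $\sigma,\sigma'$ whose Parikh vectors differ by $\ell\mu$, completing the shorter sequence with an $\ell\mu$-cycle (available by reversibility) yields two Parikh-equal sequences, and forward persistence together with determinism then force $s=s'$. The remaining step is the saturation of $\to$: whenever $i,i+m\in\phi(S)$, a direct $a$-arc from $\phi^{-1}(i)$ to $\phi^{-1}(i+m)$ already exists. I would prove it by taking any path between these two states (whose Parikh vector must be $(1+\ell n,\ell m)$ for some $\ell\geq 0$), rearranging it via forward persistence into one starting with $a$, and using the WMG realization to back-propagate the enabling of $a$ through the initial $b$-prefix. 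The symmetric statement holds for $b$. Combined with $|S|=k$ and $\gcd(n,m)=1$, this forces $\phi(S)$ to be a translate of $\{0,\ldots,k-1\}$ and $\to$ to take the stated form; the translation is absorbed into the choice of initial state.

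The main obstacle is this last saturation step: showing that the mere presence of $i$ and $i+m$ in $\phi(S)$ already implies a direct $a$-arc. Persistence and determinism of WMGs are the key ingredients, but their interplay with the arithmetic constraint $\phi(s')-\phi(s)=m$ along a path that may contain many intermediate $b$-firings is delicate, and it is here that the WMG realization (rather than purely LTS-level reasoning) appears essential, via a Keller-style rearrangement of firing sequences that back-propagates $a$-enablement across a prefix of $b$'s.
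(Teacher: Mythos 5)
Your treatment of existence and of non-existence is sound and essentially matches the paper. For existence you instantiate the circuit of Fig.~\ref{solcyc.fig} with $k-1\ge m+n-1$ tokens and read off the transition relation from Lemma~\ref{Binary-states.lem}, exactly as the paper does (and your observation that the explicit form of $\to$ then holds for every $k\ge n+m$, not only for $k=n+m$, is correct). For $k<n+m$ your argument that the $n+m$ states visited along a small cycle must be pairwise distinct -- a repeated state would cut out a non-empty subcycle with Parikh vector $\lneqq\mu$ -- is a clean LTS-level version of the paper's remark that there are not enough states to close the circuit.

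The uniqueness part is where you diverge and where there is a genuine gap, which you flag yourself: the ``saturation'' step (an $a$-arc from $\phi^{-1}(i)$ to $\phi^{-1}(i+m)$ whenever both values lie in $\phi(S)$) is never actually proved, and neither is the claim that $\phi(S)$ is a set of $k$ consecutive integers -- that does not follow from $|S|=k$ and $\gcd(n,m)=1$ alone, and in fact it depends on the unproved saturation property. The paper sidesteps all of this by working at the net level first: a connected WMG over $\{a,b\}$ with a finite reachability graph carrying the T-semiflow $\mu=(n,m)$ must reduce (up to redundant places) to the two-place circuit of Fig.~\ref{solcyc.fig}, after which Lemma~\ref{Binary-states.lem} determines the reachability graph completely once $|S|=k$ forces the token count $k-1$; your $\phi(s)$ is then literally $M_s(p_{a,b})-M_0(p_{a,b})$ and both missing steps become the content of that lemma. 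If you prefer to keep the potential-function formulation, the saturation step can be closed with one net-level observation: since $C\cdot\mu=\zero$, every row $(C(p,a),C(p,b))$ is proportional to $(m,-n)$, so $M_s(p)$ is an affine function of $\phi(s)$ for every place $p$; hence enabledness of $a$ at $s$ amounts to an upper bound on $\phi(s)$, and that bound cannot exclude $s$ when some $s'$ with $\phi(s')=\phi(s)+m$ exists, since otherwise the violated input place of $a$ would carry a negative number of tokens at $s'$. Without an argument of this kind the uniqueness claim is not established.
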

\begin{proof}  
If a solution exists, 
it has the form of Fig. \ref{solcyc.fig}.
If $k \ge n+m$,
there are exactly $k-1$ tokens in the system
by Lemma \ref{Binary-states.lem}
and the reachability graph is unique 
up to isomorphism.
From the previous results of this section,
if $M_0=n+m-1$,
then
the RG is circular and contains
exactly $n+m$ distinct states:
all the values for $i$ between $0$ and $n+m-1$ are reached in some order.
Moreover,
if we identify the states to $i$,
i.e., 
the marking of $p_{a,b}$, 
the arcs are 
$\{(i,a,i+m)|0\leq i,i+m<n+m\in S\}\cup\{(i,b,i-n)|0\leq i,i-n<n+m \in S\}$.
As a consequence, 
if $|S|<n+m$, there aren't enough states to close the circuit, 
and there is no solution.
The rest of the claim immediately results from Lemma~\ref{Binary-states.lem}.
\qed \end{proof}

\subsection{WMG-solvable Infinite Cyclic Binary LTS}\label{infiniterev.subsec}

Let us consider an infinite LTS satisfying \text{\bf b} and \text{\bf c} with only two different labels.
From the previous section,
it cannot correspond to a net of the kind illustrated in Fig.~\ref{solcyc.fig}
since $i+j$ remains constant, hence yields finitely many states.
On the other hand, 
a net of the kind illustrated
in Fig.~\ref{isolcyc.fig},
or the variant obtained by switching
the roles of $a$ and $b$,
yields infinitely many occurrences
of transition $a$, 
leading to infinitely
many different reachable markings.
Besides,
from any state, there may only be finitely many consecutive $b$'s.
Moreover, 
this is the only way to obtain infinitely many cycles with Parikh vector $(n,m)$. 

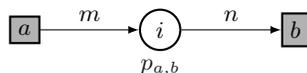
\begin{figure}[htbp]
\begin{center}
\begin{tikzpicture}[scale=0.9]
\node[place,tokens=0](p)at(2,0)[label=below:$p_{a,b}$]{$i$};
\node[transition](a)at(0,0){$a$};
\node[transition](b)at(4,0){$b$};
\draw(a)[]edge[-latex]node[above]{$m$}(p);
\draw(p)[]edge[-latex]node[above]{$n$}(b);
\end{tikzpicture}
\end{center}
\caption{A WMG solution for the infinite cyclic case.}
\label{isolcyc.fig}
\end{figure}

If $n=1$, $i$ is the maximum number of consecutive executions of $b$ from $\is$; we can then verify 
if the given LTS corresponds to the constructed net.
Otherwise, let $k$ and $l$ be the Bezout coefficients corresponding to the relatively prime numbers $m$ and $n$, 
so that $k\cdot m+l\cdot n=1$. 
If $l\geq 0\geq k$, $i$ is the maximum number of times we may execute $a^{-k}b^{l}$ 
consecutively from $\is$, 
and 
we can check again if the given LTS corresponds to the constructed net (this is a direct generalisation of the case $n=1$).
Otherwise, 
since $-n\cdot m+m\cdot n=0$, by adding this relation enough times to the previous one, we get 
$k'\cdot m+l'\cdot n=1$ with $l'\geq 0\geq k'$, and we apply the same idea.

\section{WMG-solvable Acyclic LTS: a Geometric Approach}\label{klabels-acyc.sec}

In what follows,
we consider any 
acyclic LTS satisfying property \text{\bf b}.
First,
in Subsection \ref{acyc2.sct},
we give a geometric interpretation of 
WMG-solvability
for acyclic LTS with only two different labels.
Then,
in Subsection \ref{acyck.sct},
we extend this result to any number of labels.

\subsection{Geometric Characterisation for 2 Labels}\label{acyc2.sct}

In the following,
we specialise to the WMG case   
the more general framework considered
in \cite{ErofeevW17}, Theorem 2,
using convex sets of $\nsymbol^2$.
The standard definition of convex sets of 
$\rsymbol^2$ is given by the 
segment-inclusion property:
a set $C\subseteq \rsymbol^2$ is convex
if and only if, 
for any $x,y\in C$, $[x,y]\subseteq C$, 
where $[x,y]$ is the linear segment with extremities $x$ and $y$.
However, 
this does not work for $\nsymbol^2$ (nor $\zsymbol^2$), 
as illustrated by Fig. \ref{conv.fig}:
in the set $C=\{x,y,z\}$ with $x=(0,0)$, $y=(1,2)$ and $z=(2,1)$, 
we have $[x,y]=\{x,y\}$, $[y,z]=\{y,z\}$ and $[z,x]=\{z,x\}$;
hence we have the segment-inclusion property;
however, 
clearly, 
this set should not be considered
as convex since 
the node $X=(1,1)$ is missing.

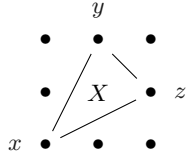
\begin{figure}[!ht]
\centering
\begin{tikzpicture}[scale=0.7]
\node[](x)at(0,0)[label=left:$x$]{$\dt$};\node[]at(1,0){$\dt$};\node[]at(2,0){$\dt$};
\node[]at(0,1){$\dt$};\node[]at(1,1){$X$};\node[](z)at(2,1)[label=right:$z$]{$\dt$};
\node[]at(0,2){$\dt$};\node[](y)at(1,2)[label=above:$y$]{$\dt$};\node[]at(2,2){$\dt$};
\draw[-](x)edge(y);\draw[-](y)edge(z);\draw[-](z)edge(x);
\end{tikzpicture}
\caption{Non-convex set in $\zsymbol^2$ 
with the segment-inclusion property.}
\label{conv.fig}
\end{figure}

In \cite{doignon73},
two equivalent definitions of convex sets 
in lattices like $\zsymbol^2$
are provided, 
which immediately extend to $\nsymbol^2$:
\begin{enumerate}
\item either as the intersection of a convex set of $\rsymbol^2$ with $\zsymbol^2$,
\item or as the intersection of half planes ${\mathcal L}_i$, with 
 ${\mathcal L}_i=\{(x,y)\in\zsymbol^2|a_i\cdot x+b_i\cdot y\geq c_i\mbox{ for some $a_i,b_i,c_i\in\zsymbol$}\}$.
If the convex set is finite,
we can use a finite set of such half-planes,
otherwise we may need (countably) infinitely many of them (notice that infinite convex sets exist
with a boundary defined by finitely many half-planes).
\end{enumerate}

In order to characterise the acyclic LTS  
that are solvable by WMG nets with two labels,
we first
identify isomorphically
each state $s$ with: \\
$\Delta_s=(\mbox{\small number of $a$'s in any path from $\is$ to $s$},
\mbox{\small number of $b$'s in any path from $\is$ to $s$})$ 
(this is coherent from Proposition~\ref{WMG.prop}), 
which amounts to consider for $S$ a part of $\nsymbol^2$
containing $(0,0)$ (=$\is$). 
From the full reachability, 
$S$ is connected 
(there is a directed path from $(0,0)$ 
to any $(i,j)\in S$, 
hence an undirected path between any two states). 
From Keller's theorem \cite{keller} (due to determinism and persistence), 
full reachability and Proposition~\ref{WMG.prop},
we have that 
$(i,j)\stackrel{a}{\rightarrow}(i',j')\iff i'=i+1\land j'=j$ 
and 
$(i,j)\stackrel{b}{\rightarrow}(i',j')\iff i'=i\land j'=j+1$. 

\begin{figure}[htbp]
\centering
\begin{tikzpicture}[scale=0.9]
\node[place,tokens=0](p)at(2,0)[label=below:$p_{a,b}$]{$m_{a,b}$};  
\node[transition](a)at(0,0){$a$};
\node[transition](b)at(4,0){$b$};
\draw(a)[]edge[-latex]node[above]{$W_a$}(p);
\draw(p)[]edge[-latex]node[above]{$W_b$}(b);
\node[place,tokens=0](p')at(8,0)[label=below:$p_{a,*}$]{$m_{a,*} $};  
\node[transition](a')at(6,0){$a$};
\draw(a')[]edge[-latex]node[above]{$W_a$}(p');
\node[place,tokens=0](p'')at(10,0)[label=below:$p_{*,b}$]{$m_{*,b}$}; 
\node[transition](b'')at(12,0){$b$};
\draw(p'')[]edge[-latex]node[above]{$W_b$}(b'');
\end{tikzpicture}
\caption{General places for a WMG synthesis, 
with initial marking $m_{a,b} = M_0(p_{a,b})$, 
$m_{a,*} = M_0(p_{a,*})$ 
and
$m_{*,b} = M_0(p_{*,b})$.
}
\label{place.fig}
\end{figure}
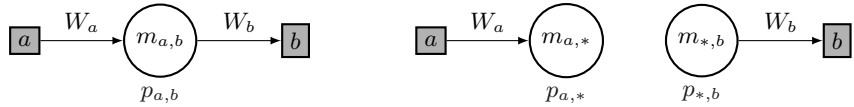

If the system is WMG-solvable, 
it must be defined by a finite set 
of places of the kind $p_{a,b}$ and 
$p_{b,a}$ in Fig. \ref{place.fig}
(with $\gcd(W_a,W_b)=1$ 
and 
$m_{a,b}, m_{b,a}\geq 0$), 
including the special cases $p_{*,b}$
(with $W_a=0$, $W_b=1$ and $m_{*,b}>0$, 
the case $m_{*,b}=0$ only serving to make $b$ non-firable 
but we assumed the system weakly live) 
or $p_{*,a}$, and $p_{a,*}$ 
(with $W_b=0$, $W_a=1$ and $m_{*,a}=m_{a,*}=0$) 
or $p_{b,*}$.
For a place $p_{a,b}$, 
we have for each state $s=(i,j)$ 
that the corresponding marking is 
$M_s(p_{a,b})=M_0(p_{a,b})+i\cdot W_a-j\cdot W_b$,
and since we must have $M_s(p_{a,b})\geq 0$, this defines a `region', both in the sense of \cite{bbd} and in an intuitive geometric meaning:
$R_{p_{a,b}}=\{(i,j)|M_0(p_{a,b})+W_a\cdot i-W_b\cdot j\geq 0\}$ or,
permuting the roles of $a$ and $b$,
$R_{p_{b,a}}=\{(i,j)|M_0(p_{b,a})-W_a\cdot i+W_b\cdot j\geq 0\}$,  
i.e. in either case the intersection of $\nsymbol^2$ with a half plane of $\zsymbol^2$.
These regions will be called in the following WMG-regions.
Notice that 
\cite{ErofeevW17} considers additional regions,  
where $W_a<0$ or $W_b<0$.
Each such region is convex, 
as well as any intersection of such regions. 

We deduce the next specialisation  
of Theorem 2 in \cite{ErofeevW17}.

\begin{theorem}[WMG-solvable acyclic binary LTS]
\label{acyc2.thm}
An acyclic  LTS satisfying property \text{\bf b} is WMG-solvable if and only if, when applied on 
$\nsymbol^2$, its set of states $S$ is 
connected, convex and delimited by 
(i.e., it is the intersection of) 
a finite set of WMG-regions.
A possible solution is then provided
by the places corresponding to these regions.
\end{theorem}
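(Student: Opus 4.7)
The plan is to derive the theorem as a specialization of Theorem 2 of \cite{ErofeevW17}: the general framework characterizes solvability via intersections of regions (half-planes in $\zsymbol^2$ intersected with $\nsymbol^2$), and the task reduces to pinpointing which regions arise from WMG places, namely the WMG-regions identified just before the theorem statement. The identification of states with Parikh vectors $\Delta_s \in \nsymbol^2$, well defined in the acyclic case by Proposition \ref{WMG.prop}, is already in place, as is the canonical grid arc structure provided by Keller's theorem together with full reachability.

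For necessity, assume the LTS is solved by a WMG $(N,M_0)$. Each place $p$ fits one of the patterns of Fig. \ref{place.fig}, whose non-negativity at every reachable marking is exactly the WMG-region $R_p$, so the state set $S$ sits inside $R := \bigcap_{p \in P} R_p$. The converse $R \subseteq S$ I would prove by induction on $i+j$ along a monotone grid path from $(0,0)$ to a target $(i,j) \in R$: convexity of $R$ guarantees such a path lies inside $R$, and at each intermediate point the enabling of $a$ or $b$ reduces to non-negativity of the relevant output-place marking, which is precisely the WMG-region constraint at the next grid point. Connectedness and convexity of $S$ then follow automatically. For sufficiency, given $S$ connected, convex, and equal to the intersection of finitely many WMG-regions, read off from each defining half-plane a place as in Fig. \ref{place.fig} with weights and initial marking determined by its coefficients and constant term. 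The resulting WMG has, by the necessity argument applied in reverse, reachability graph with state set exactly $\bigcap_i R_i = S$ and canonical grid arcs, hence isomorphic to the input LTS.

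The main obstacle is the inclusion $R \subseteq S$ in the necessity direction: satisfying all the half-plane constraints is \emph{a priori} weaker than being reachable by an actual firing sequence. The resolution leans on the special shape of WMG-regions — the coefficients of $i$ and $j$ have opposite signs (input versus output of a place) — so that monotone grid paths interact with the constraints predictably, and convexity of the intersection is enough to ensure a coordinate-increasing path from $(0,0)$ to any $(i,j) \in R$ stays inside $R$. Forward persistence of WMG systems (Proposition \ref{WMG.prop}) then upgrades this combinatorial path into an actual firing sequence, closing the argument.
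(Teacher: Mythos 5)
Your overall strategy --- specialise the general region-theoretic characterisation of \cite{ErofeevW17} to the WMG case after embedding the states as Parikh vectors in $\nsymbol^2$ --- is the same as the paper's (which in fact states the theorem as a direct corollary of that result and gives no separate proof). But the concrete argument you sketch for the hard inclusion contains a genuine error. You claim that for a convex intersection $R$ of WMG-regions, every $(i,j)\in R$ is joined to $(0,0)$ by a monotone grid path lying inside $R$, and you use this both to get $R\subseteq S$ in the necessity direction and (``applied in reverse'') to get total reachability in the sufficiency direction. This is false, and the paper itself exhibits the counterexample in Fig.~\ref{non-reach.fig}: the set $\{(0,0),(1,0),(2,1)\}$ is convex in $\nsymbol^2$ and is exactly the intersection of the WMG-regions $i-2j\geq 0$ and $1-i+j\geq 0$ (together with $j\geq 0$), yet $(2,1)$ has no monotone grid path from the origin inside the set, since neither $(2,0)$ nor $(1,1)$ satisfies both constraints. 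The ``opposite signs of the coefficients'' observation does not rescue the induction: from $(1,0)$ every grid step leaves $R$.

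This breaks your necessity argument in a second way: it is simply not true that for an arbitrary WMG solving the LTS, the state set equals the intersection of the regions read off from \emph{its} places. The net on the right of Fig.~\ref{non-reach.fig} solves the two-state LTS $\{(0,0),(1,0)\}$, but the intersection of its places' regions is the three-point set above. To prove necessity one must instead solve a separation problem: for each $(i,j)\in\nsymbol^2\setminus S$ produce a WMG-region containing $S$ but excluding $(i,j)$ (possibly a region not corresponding to any place of the given solution), and argue that finitely many suffice; this is exactly what the cited Theorem~2 of \cite{ErofeevW17} supplies and what your proposal replaces with the invalid path argument. Symmetrically, in the sufficiency direction the hypothesis of connectedness (undirected grid-connectedness of $S$) is precisely what excludes the Fig.~\ref{non-reach.fig} pathology and is what must be combined with convexity to establish that every state of $S$ is actually reachable in the constructed net; your proof never invokes it at the point where it is needed.
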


For any finite LTS, 
if it is the intersection of WMG-regions, 
it is the intersection of a finite set 
of such regions.
However, 
the result may be extended to an infinite LTS, 
but then it may be necessary to specify
that only a finite set of regions is allowed.
This is illustrated
by Fig.~\ref{synth-noncirc.fig}.

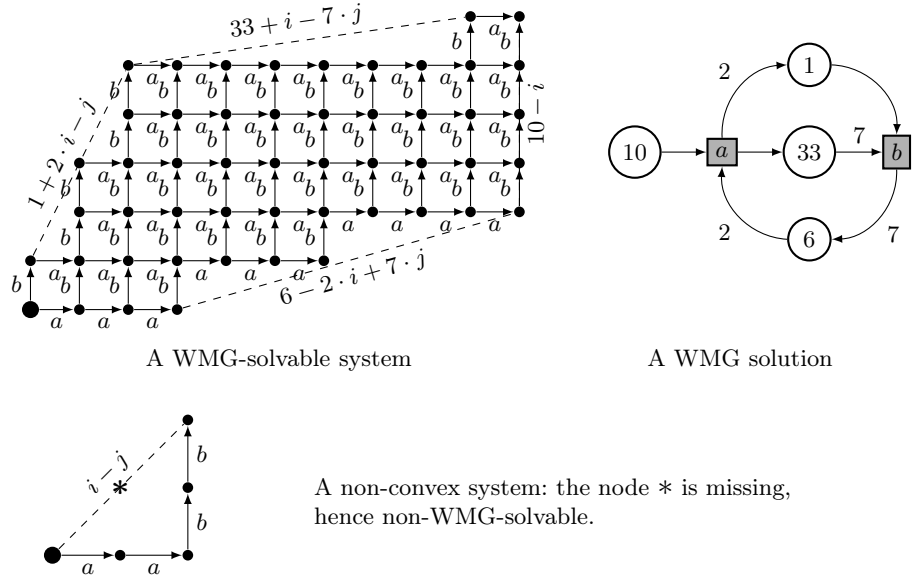
\begin{figure}[!ht]
\centering
\begin{tabular}{cc}
\begin{tikzpicture}[scale=0.65]
\node[circle,fill=black!100,inner sep=0.08cm](00)at(0,0)[]{};
\node[circle,fill=black!100,inner sep=0.05cm](10)at(1,0)[]{};
\node[circle,fill=black!100,inner sep=0.05cm](20)at(2,0)[]{};
\node[circle,fill=black!100,inner sep=0.05cm](30)at(3,0)[]{};
\node[circle,fill=black!100,inner sep=0.05cm](01)at(0,1){};
\node[circle,fill=black!100,inner sep=0.05cm](11)at(1,1){};
\node[circle,fill=black!100,inner sep=0.05cm](21)at(2,1){};
\node[circle,fill=black!100,inner sep=0.05cm](31)at(3,1){};
\node[circle,fill=black!100,inner sep=0.05cm](41)at(4,1){};
\node[circle,fill=black!100,inner sep=0.05cm](51)at(5,1){};
\node[circle,fill=black!100,inner sep=0.05cm](61)at(6,1){};
\node[circle,fill=black!100,inner sep=0.05cm](12)at(1,2)[]{};
\node[circle,fill=black!100,inner sep=0.05cm](22)at(2,2)[]{};
\node[circle,fill=black!100,inner sep=0.05cm](32)at(3,2)[]{};
\node[circle,fill=black!100,inner sep=0.05cm](42)at(4,2)[]{};
\node[circle,fill=black!100,inner sep=0.05cm](52)at(5,2)[]{};
\node[circle,fill=black!100,inner sep=0.05cm](62)at(6,2)[]{};
\node[circle,fill=black!100,inner sep=0.05cm](72)at(7,2)[]{};
\node[circle,fill=black!100,inner sep=0.05cm](82)at(8,2)[]{};
\node[circle,fill=black!100,inner sep=0.05cm](92)at(9,2)[]{};
\node[circle,fill=black!100,inner sep=0.05cm](102)at(10,2)[]{};
\draw[-latex](00)to node[auto,below]{$a$}(10);
\draw[-latex](10)to node[auto,below]{$a$}(20);
\draw[-latex](20)to node[auto,below]{$a$}(30);
\draw[-latex](00)to node[auto,left]{$b$}(01);
\draw[-latex](10)to node[auto,left]{$b$}(11);
\draw[-latex](20)to node[auto,left]{$b$}(21);
\draw[-latex](30)to node[auto,left]{$b$}(31);
\draw[-latex](01)to node[auto,below]{$a$}(11);
\draw[-latex](11)to node[auto,below]{$a$}(21);
\draw[-latex](21)to node[auto,below]{$a$}(31);
\draw[-latex](31)to node[auto,below]{$a$}(41);
\draw[-latex](41)to node[auto,below]{$a$}(51);
\draw[-latex](51)to node[auto,below]{$a$}(61);
\draw[-latex](12)to node[auto,below]{$a$}(22);
\draw[-latex](22)to node[auto,below]{$a$}(32);
\draw[-latex](32)to node[auto,below]{$a$}(42);
\draw[-latex](42)to node[auto,below]{$a$}(52);
\draw[-latex](52)to node[auto,below]{$a$}(62);
\draw[-latex](62)to node[auto,below]{$a$}(72);
\draw[-latex](72)to node[auto,below]{$a$}(82);
\draw[-latex](82)to node[auto,below]{$a$}(92);
\draw[-latex](92)to node[auto,below]{$a$}(102);
\draw[-latex](11)to node[auto,left]{$b$}(12);
\draw[-latex](21)to node[auto,left]{$b$}(22);
\draw[-latex](31)to node[auto,left]{$b$}(32);
\draw[-latex](41)to node[auto,left]{$b$}(42);
\draw[-latex](51)to node[auto,left]{$b$}(52);
\draw[-latex](61)to node[auto,left]{$b$}(62);
\node[circle,fill=black!100,inner sep=0.05cm](13)at(1,3)[]{};
\node[circle,fill=black!100,inner sep=0.05cm](23)at(2,3)[]{};
\node[circle,fill=black!100,inner sep=0.05cm](33)at(3,3)[]{};
\node[circle,fill=black!100,inner sep=0.05cm](43)at(4,3)[]{};
\node[circle,fill=black!100,inner sep=0.05cm](53)at(5,3)[]{};
\node[circle,fill=black!100,inner sep=0.05cm](63)at(6,3)[]{};
\node[circle,fill=black!100,inner sep=0.05cm](73)at(7,3)[]{};
\node[circle,fill=black!100,inner sep=0.05cm](83)at(8,3)[]{};
\node[circle,fill=black!100,inner sep=0.05cm](93)at(9,3)[]{};
\node[circle,fill=black!100,inner sep=0.05cm](103)at(10,3)[]{};
\draw[-latex](13)to node[auto,below]{$a$}(23);
\draw[-latex](23)to node[auto,below]{$a$}(33);
\draw[-latex](33)to node[auto,below]{$a$}(43);
\draw[-latex](43)to node[auto,below]{$a$}(53);
\draw[-latex](53)to node[auto,below]{$a$}(63);
\draw[-latex](63)to node[auto,below]{$a$}(73);
\draw[-latex](73)to node[auto,below]{$a$}(83);
\draw[-latex](83)to node[auto,below]{$a$}(93);
\draw[-latex](93)to node[auto,below]{$a$}(103);
\draw[-latex](12)to node[auto,left]{$b$}(13);
\draw[-latex](22)to node[auto,left]{$b$}(23);
\draw[-latex](32)to node[auto,left]{$b$}(33);
\draw[-latex](42)to node[auto,left]{$b$}(43);
\draw[-latex](52)to node[auto,left]{$b$}(53);
\draw[-latex](62)to node[auto,left]{$b$}(63);
\draw[-latex](72)to node[auto,left]{$b$}(73);
\draw[-latex](82)to node[auto,left]{$b$}(83);
\draw[-latex](92)to node[auto,left]{$b$}(93);
\draw[-latex](102)to node[auto,left]{$b$}(103);
\node[circle,fill=black!100,inner sep=0.05cm](24)at(2,4)[]{};
\node[circle,fill=black!100,inner sep=0.05cm](34)at(3,4)[]{};
\node[circle,fill=black!100,inner sep=0.05cm](44)at(4,4)[]{};
\node[circle,fill=black!100,inner sep=0.05cm](54)at(5,4)[]{};
\node[circle,fill=black!100,inner sep=0.05cm](64)at(6,4)[]{};
\node[circle,fill=black!100,inner sep=0.05cm](74)at(7,4)[]{};
\node[circle,fill=black!100,inner sep=0.05cm](84)at(8,4)[]{};
\node[circle,fill=black!100,inner sep=0.05cm](94)at(9,4)[]{};
\node[circle,fill=black!100,inner sep=0.05cm](104)at(10,4)[]{};
\draw[-latex](24)to node[auto,below]{$a$}(34);
\draw[-latex](34)to node[auto,below]{$a$}(44);
\draw[-latex](44)to node[auto,below]{$a$}(54);
\draw[-latex](54)to node[auto,below]{$a$}(64);
\draw[-latex](64)to node[auto,below]{$a$}(74);
\draw[-latex](74)to node[auto,below]{$a$}(84);
\draw[-latex](84)to node[auto,below]{$a$}(94);
\draw[-latex](94)to node[auto,below]{$a$}(104);
\draw[-latex](23)to node[auto,left]{$b$}(24);
\draw[-latex](33)to node[auto,left]{$b$}(34);
\draw[-latex](43)to node[auto,left]{$b$}(44);
\draw[-latex](53)to node[auto,left]{$b$}(54);
\draw[-latex](63)to node[auto,left]{$b$}(64);
\draw[-latex](73)to node[auto,left]{$b$}(74);
\draw[-latex](83)to node[auto,left]{$b$}(84);
\draw[-latex](93)to node[auto,left]{$b$}(94);
\draw[-latex](103)to node[auto,left]{$b$}(104);
\node[circle,fill=black!100,inner sep=0.05cm](25)at(2,5)[]{};
\node[circle,fill=black!100,inner sep=0.05cm](35)at(3,5)[]{};
\node[circle,fill=black!100,inner sep=0.05cm](45)at(4,5)[]{};
\node[circle,fill=black!100,inner sep=0.05cm](55)at(5,5)[]{};
\node[circle,fill=black!100,inner sep=0.05cm](65)at(6,5)[]{};
\node[circle,fill=black!100,inner sep=0.05cm](75)at(7,5)[]{};
\node[circle,fill=black!100,inner sep=0.05cm](85)at(8,5)[]{};
\node[circle,fill=black!100,inner sep=0.05cm](95)at(9,5)[]{};
\node[circle,fill=black!100,inner sep=0.05cm](105)at(10,5)[]{};
\draw[-latex](25)to node[auto,below]{$a$}(35);
\draw[-latex](35)to node[auto,below]{$a$}(45);
\draw[-latex](45)to node[auto,below]{$a$}(55);
\draw[-latex](55)to node[auto,below]{$a$}(65);
\draw[-latex](65)to node[auto,below]{$a$}(75);
\draw[-latex](75)to node[auto,below]{$a$}(85);
\draw[-latex](85)to node[auto,below]{$a$}(95);
\draw[-latex](95)to node[auto,below]{$a$}(105);
\draw[-latex](24)to node[auto,left]{$b$}(25);
\draw[-latex](34)to node[auto,left]{$b$}(35);
\draw[-latex](44)to node[auto,left]{$b$}(45);
\draw[-latex](54)to node[auto,left]{$b$}(55);
\draw[-latex](64)to node[auto,left]{$b$}(65);
\draw[-latex](74)to node[auto,left]{$b$}(75);
\draw[-latex](84)to node[auto,left]{$b$}(85);
\draw[-latex](94)to node[auto,left]{$b$}(95);
\draw[-latex](104)to node[auto,left]{$b$}(105);
\node[circle,fill=black!100,inner sep=0.05cm](96)at(9,6)[]{};
\node[circle,fill=black!100,inner sep=0.05cm](106)at(10,6)[]{};
\draw[-latex](96)to node[auto,below]{$a$}(106);
\draw[-latex](95)to node[auto,left]{$b$}(96);
\draw[-latex](105)to node[auto,left]{$b$}(106);
\draw[dashed](30)to node[auto,below,sloped]{$6-2\cdot i+7\cdot j$}(102);
\draw[dashed](01)to node[auto,above,sloped]{$1+2\cdot i- j$}(25);
\draw[dashed](25)to node[auto,above,sloped]{$33+ i- 7\cdot j$}(96);
\draw[dashed](102)to node[auto,below,sloped]{$10- i$}(106);
\end{tikzpicture}
&
\hspace*{4mm}
\raisebox{10mm}{
\begin{tikzpicture}[scale=0.58]
\node[place,tokens=0](p)at(2,2)[]{$1$};
\node[place,tokens=0](p')at(2,-2)[]{$6$};
\node[place,tokens=0](p'')at(2,0)[]{$33$};
\node[place,tokens=0](pp)at(-2,0)[]{$10$};
\node[transition](a)at(0,0){$a$};
\node[transition](b)at(4,0){$b$};
\draw(a)[]edge[-latex,bend left=45]node[above left]{$2$}(p);
\draw(p)[]edge[-latex,bend left=45]node[above]{$$}(b);
\draw(p')[]edge[-latex,bend left=45]node[below left]{$2$}(a);
\draw(b)[]edge[-latex,bend left=45]node[below right]{$7$}(p');
\draw(a)[]edge[-latex]node[above]{$$}(p'');
\draw(p'')[]edge[-latex]node[above]{$7$}(b);
\draw(pp)[]edge[-latex]node[above]{$$}(a);
\end{tikzpicture}
}\\
A WMG-solvable system
&
A WMG solution
\\
\end{tabular}

\vspace*{5mm}
\begin{tabular}{cc}
\begin{tikzpicture}[scale=0.9]
\node[circle,fill=black!100,inner sep=0.08cm](00)at(0,0)[]{};
\node[circle,fill=black!100,inner sep=0.05cm](10)at(1,0)[]{};
\node[circle,fill=black!100,inner sep=0.05cm](20)at(2,0)[]{};
\node[circle,fill=black!100,inner sep=0.05cm](21)at(2,1)[]{};
\node[circle,fill=black!100,inner sep=0.05cm](22)at(2,2)[]{};
\draw[-latex](00)to node[auto,below]{$a$}(10);
\draw[-latex](10)to node[auto,below]{$a$}(20);
\draw[-latex](20)to node[auto,right]{$b$}(21);
\draw[-latex](21)to node[auto,right]{$b$}(22);
\draw[dashed](00)to node[auto,above,sloped]{$ i- j$}(22);
\node(qq)at(1,1)[]{\large $\boldsymbol{*}$};
\end{tikzpicture}
&
\hspace*{1cm}
\raisebox{10mm}{
\begin{minipage}{0.6\linewidth}
A non-convex system:
the node {\large $*$} is missing,\\
hence non-WMG-solvable.
\end{minipage}
}
\\
\end{tabular}
\caption{Illustration of Theorem \ref{acyc2.thm}.} 
\label{synth-noncirc.fig}
\end{figure}

\begin{figure}[!ht]
\centering
\begin{tikzpicture}[scale=1.7]
\node[circle,fill=black!100,inner sep=0.08cm](00)at(0,0)[label=left:$\is$]{};
\node[circle,fill=black!100,inner sep=0.05cm](10)at(1,0)[]{};
\node[circle,fill=black!100,inner sep=0.05cm](21)at(2,1)[]{};
\node[circle,fill=black!100,inner sep=0.05cm]()at(2,0)[]{};
\node[circle,fill=black!100,inner sep=0.05cm]()at(0,1)[]{};
\node[circle,fill=black!100,inner sep=0.05cm]()at(1,1)[]{};
\draw[-latex](00)to node[auto,below]{$a$}(10);
\draw[dashed](00)to node[auto,above,sloped]{$ i-2\cdot  j$}(21);
\draw[dashed](10)to node[auto,below,sloped]{$1- i+j$}(21);
\end{tikzpicture}\hspace{2cm}
\raisebox{2mm}{
\begin{tikzpicture}[scale=0.75]
\node[place,tokens=0,label=below:$p_2$](p)at(2,1)[]{}; 
\node[place,tokens=1,label=above:$p_1$](p')at(2,-1)[]{}; 
\node[transition](a)at(0,0){$a$};
\node[transition](b)at(4,0){$b$};
\draw(a)[]edge[-latex,bend  left=0]node[above]{$$}(p);
\draw(p)[]edge[-latex,bend left=0]node[above right]{$2$}(b);
\draw(p')[]edge[-latex,bend left=0]node[below]{$$}(a);
\draw(b)[]edge[-latex,bend left=0]node[below]{$$}(p');
\end{tikzpicture}
}
\caption{Convex sets defined by WMG-regions
may be non-totally reachable in $\nsymbol^2$.}
\label{non-reach.fig}
\end{figure}
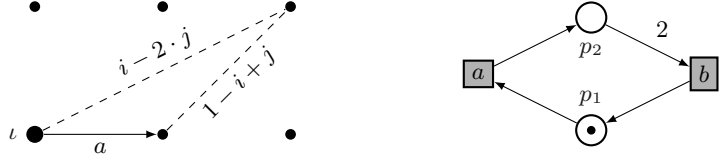

Note that total reachability does not arise from WMG-regions alone, as illustrated 
by Fig.~\ref{non-reach.fig}: on the left,
the points $\is=(0,0)$, $(1,0)$ and $(2,1)$ form a convex set of $\nsymbol^2$, intersection of the WMG-regions
$i-2\cdot j\geq 0$ and $1-i+j\geq 0$ (plus $j\geq 0$ to certify being in $\nsymbol^2$),
but $(2,1)$ is not reachable from $\is$.
These WMG-regions yield the WMG system on the right of the same figure.

A closer look shows 
that state $\is$ corresponds to 
marking $(1,0)$, state $(1,0)$
to marking $(0,1)$
and 
$(2,1)$ to marking $(0,0)$. 
The latter is not reachable, 
but is potentially reachable 
in the sense of \cite{tcs97}.
Let us recall that, 
from the classical state equation $M[\sigma\rangle M'\impl M'=M+C\cdot\Parikh(\sigma)$ 
where $C$ is the incidence matrix, and that a marking $M$ is potentially reachable from the initial marking $M_0$ if 
$M=M_0+C\cdot\alpha$ for some 
$T$-vector $\alpha \ge \zero$ (non-necessarily the Parikh vector of some firing sequence).
Indeed, 
here $C=\left (\begin{array}{rr} -1&1\\1&-2\end{array}\right )$, and $(0,0)=(1,0)+C\cdot(2,1)$ 
(caution: here the vectors are to be considered as column vectors).

Another possible interpretation is to consider the net on the right of Fig.~\ref{non-reach.fig} 
as a {\em continuous} or {\em fluid} one, in the sense of \cite{da10}. 
In those models, 
a transition may be executed fractionally and reachable markings may be real 
vectors with no negative component.   
Thus,
in our case,
we can have the firing sequence
{\small $$(1,0)[a\rangle(0,1)[b^{1/2}\rangle(1/2,0)[a^{1/2}\rangle(0,1/2)[b^{1/4}\rangle(1/4,0)[a^{1/4}\rangle(0,1/4)[b^{1/8}\rangle(1/8,0)\ldots$$}
We cannot finitely reach the marking $(0,0)$, 
but if we allow {\em limit-reachability}, 
then the accumulated firings $2\cdot a+b$ 
finally lead to the marking $(0,0)$.
More generally,  
the whole interior of the shown convex set 
becomes reachable.

Next,
we generalise 
these notions and results 
to any number of labels.  

\subsection{Geometric Characterisation for any Number of Labels}\label{acyck.sct}

Let us consider an acyclic LTS satisfying property \text{\bf b} with $n$ labels $t_1,t_2,\ldots,t_n$.
Again, 
we identify each state $s$ to its distance $\Delta_s\in\nsymbol^n$, 
giving for each $i$ the number of $t_i$'s in any path from $\is$ to $s$. 
Arcs are defined by the relations $s[t_i\rangle s'$ when $s,s'\in S$,
$\Delta_{s'}(t_i)=\Delta_{s}(t_i)+1$ 
and $\Delta_{s'}(t_j)=\Delta_{s}(t_j)$ 
for some $i$ and any $j\neq i$.

We consider special WMG-regions 
of the kind $k+h\cdot x_i-l\cdot x_j\geq 0$ 
for some $k,h\geq 0$, $l>0$ 
and 
$i\neq j$.
In particular, 
each of them is either
parallel to a plane including two axes (if $h>0$), or perpendicular to one axis (if $h=0$). 
From the specialisation of~\cite{ErofeevW17}, 
we deduce the following.

\begin{theorem}[WMG-solvable acyclic  n-ary systems]
\label{acycn.thm}
An acyclic LTS satisfying property \text{\bf b} 
with $n$ different labels
is WMG-solvable if and only if, 
when applied on $\nsymbol^n$, 
its set of states $S$ is 
connected, convex and delimited by 
(i.e., it is the intersection of) a finite set 
of WMG-regions.
A possible solution is then provided by the places corresponding to these regions.
\end{theorem}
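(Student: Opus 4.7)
The plan is to lift the binary argument of Theorem~\ref{acyc2.thm} to $n$ dimensions. Using Proposition~\ref{WMG.prop} (in the acyclic case, all paths from $s$ to $s'$ carry the same Parikh vector) together with the full reachability and forward/backward determinism of property~\text{\bf b}, I map each state $s$ to the vector $\Delta_s\in\nsymbol^n$ whose $i$-th component counts the occurrences of $t_i$ on any path from $\is$ to $s$. Forward persistence yields a Keller-style diamond property, which prevents two distinct states from sharing the same $\Delta_s$, so the map $s\mapsto\Delta_s$ is injective and each arc $s[t_i\rangle s'$ translates to a unit step $\Delta_{s'}=\Delta_s+e_i$ along the $i$-th coordinate axis. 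Since $\is$ maps to $\vec{0}$ and every state is reachable, the image $S\subseteq\nsymbol^n$ is lattice-connected.

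\textbf{Necessity.}
Suppose the LTS is the reachability graph of a WMG $\mathcal{S}=(N,M_0)$. Each place $p$ has at most one input transition $t_i$ (with weight $h=W(t_i,p)$) and at most one output transition $t_j$ (with weight $\ell=W(p,t_j)$), so its marking at the state labelled by $x\in\nsymbol^n$ equals $M_0(p)+h\,x_i-\ell\,x_j$. The non-negativity requirement thus defines a WMG-region of the announced shape, in one of the three place shapes $p_{a,b}$, $p_{*,b}$, $p_{a,*}$ of Figure~\ref{place.fig}. Hence $S$ is contained in the intersection of the finitely many WMG-regions arising from the places of $N$. The reverse inclusion uses the marking equation $M=M_0+C\cdot\alpha$ (so every lattice point satisfying all region constraints is potentially reachable) together with forward persistence applied inductively along the coordinate distance from $\vec{0}$ to promote potential reachability into genuine reachability.

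\textbf{Sufficiency.}
Conversely, given a finite family of WMG-regions whose intersection with $\nsymbol^n$ is the connected convex set $S$, I build a WMG with one place per region, wired with the parameters $(k,h,\ell)$ read off that region (accommodating the degenerate shapes $p_{*,b}$ and $p_{a,*}$). By construction every reachable marking of the resulting WMG satisfies all place-constraints and so maps into $S$; the reverse inclusion follows by induction on the coordinate distance from $\vec{0}$, exploiting that a convex subset of $\nsymbol^n$ is path-connected by unit coordinate steps and that, whenever $x\in S$ and $x-e_i\in S$, the constraint analysis shows $t_i$ is enabled at the marking corresponding to $x-e_i$ and fires to that corresponding to $x$. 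The bijection $s\leftrightarrow\Delta_s$ then realises the required isomorphism between the LTS and the reachability graph of the constructed WMG.

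\textbf{Main obstacle.}
The delicate point is the reachability upgrade appearing in both directions: WMG-regions only encode marking inequalities, and, as Figure~\ref{non-reach.fig} illustrates, finitely many WMG-regions together with convexity do not by themselves force every lattice point of the intersection to be reachable, which is precisely why the statement includes the connectedness hypothesis. Handling this cleanly amounts to specialising Theorem~2 of~\cite{ErofeevW17} to the WMG-regions introduced above, i.e.\ checking that restricting to non-negative coefficients $h,\ell$ and to the three admissible place shapes cuts out exactly the WMG-solvable subclass and that no additional region type is ever needed. The remainder is essentially bookkeeping.
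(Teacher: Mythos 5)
Your sufficiency direction is essentially workable: the computation showing that $k+h x_j-\ell x_i\ge 0$ at $x$ is equivalent to $t_i$ being enabled at the marking of $x-e_i$ is the right observation, and the converse arc correspondence (an arc of the LTS must exist between any two adjacent states of $S$) is supplied by Keller's theorem, which you mention in the setup but should invoke explicitly there. The genuine gap is in your necessity direction. You claim that every lattice point satisfying all the region constraints \emph{arising from the places of the given WMG solution} is potentially reachable and can then be promoted to genuine reachability by forward persistence. This is exactly what the paper's Figure~\ref{non-reach.fig} refutes: for the WMG shown there, the point $(2,1)$ satisfies both place constraints $i-2\cdot j\ge 0$ and $1-i+j\ge 0$, yet neither of its lattice predecessors $(1,1)$ nor $(2,0)$ does, so your induction on the coordinate distance from the origin never reaches it, and indeed $(2,1)$ is only potentially (limit-) reachable, not reachable. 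Hence for a WMG solution $N$ the inclusion $\bigcap_{p\in P}R_p\subseteq S$ is simply false, and the regions witnessing that ``$S$ is an intersection of WMG-regions'' are in general \emph{not} those read off the places of $N$: one must produce additional separating WMG-regions that cut off the potentially-reachable-but-unreachable points. That separation argument is the actual content of Theorem~2 of \cite{ErofeevW17}, which the paper invokes wholesale rather than reproving; dismissing it as ``essentially bookkeeping'' leaves the hard half of the equivalence unproved.

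A secondary slip: in the sufficiency part you assert that a convex subset of $\nsymbol^n$ is path-connected by unit coordinate steps; the same set $\{(0,0),(1,0),(2,1)\}$ is convex (an intersection of WMG-regions with $\nsymbol^2$) but not unit-step connected, which is precisely why connectedness appears as a separate hypothesis in the statement. You do assume connectedness, but the directed unit-step paths you actually need come from the total reachability of the given LTS (property \text{\bf b}), not from convexity.
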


However, 
this characterisation is less intuitively
(visually) interpretable when $n>2$. 
Hence it will usually be more efficient to use the general WMG synthesis procedure described in \cite{DH2018}.

\section{A Sufficient Condition of Circular WMG-solvability for any Number of Labels}\label{klabels-cyc.sec} 

In this section,
we provide a general sufficient condition 
for the cyclic solvability of $k$-ary words,
for any positive integer $k$.
This condition,
embodied by the next theorem,
 uses binary subwords obtained by projection\footnote{The projection of a word 
$w\in A^*$  
on a set $A' \subseteq A$ of labels 
is the maximum subword of $w$ 
whose labels belong to $A'$, 
noted 
$\projection{w}{A'}$.
For example, 
the projection of 
the word 
$w = \ell_1 \, \ell_2 \, \ell_3 \, \ell_2$
on the set $\{\ell_1 ,\, \ell_2\}$ is the word 
$\ell_1 \, \ell_2 \, \ell_2$.} 
and containing occurrences of two different labels 
that are contiguous somewhere in the $k$-ary word.
The other binary subwords are not needed since they lack this contiguity and do not capture the direct causality.

\begin{theorem}\label{kary.theo}
Consider any word $w$ over
any finite alphabet $T$ such that $\Parikh(w)$ is prime.
Suppose the following:
$\forall u = \projection{w}{t_1t_2}$ (i.e., the projection of $w$ on $\{t_1,t_2\}$)
for some $t_1,t_2$ such that $t_1 \neq t_2 \in T$,
and $w = (w_1 t_1 t_2 w_2)$ or $w = (t_2 w_3 t_1)$,
$u = v^\ell$ for some positive integer $\ell$,
$\Parikh(v)$ is prime,
and $v$ is cyclically solvable by a circuit.
Then,
$w$ is cyclically solvable with a WMG.
\end{theorem}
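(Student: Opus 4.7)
The plan is to construct a WMG $N$ with initial marking $M_0$ such that its reachability graph is isomorphic to the circular LTS induced by $w$, by gluing together the binary circuit solutions furnished by Theorem~\ref{circularbinarywords.thm}. Specifically, for every ordered pair $(t_1, t_2)$ of distinct labels of $T$ that is cyclically contiguous in $w$, I would add to $N$ one place $p_{t_1, t_2}$ with unique input transition $t_1$ and unique output transition $t_2$, with arc weights $W(t_1, p_{t_1,t_2}) = \Parikh(v_{t_1,t_2})(t_2)$ and $W(p_{t_1,t_2}, t_2) = \Parikh(v_{t_1,t_2})(t_1)$, and with initial token count $M_0(p_{t_1,t_2})$ inherited from the binary circuit solving $v_{t_1,t_2}$ cyclically, choosing the binary starting state so that $\projection{w}{t_1 t_2}$ is fireable from $M_0$ in that circuit. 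By construction, each place has one input and one output transition, so $N$ is a WMG.

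Next, I would check two facts. First, $\Parikh(w)$ is a T-semiflow of $N$: for every place $p_{t_1,t_2}$, the identity $\Parikh(w)(t_1) \cdot \Parikh(v_{t_1,t_2})(t_2) = \Parikh(w)(t_2) \cdot \Parikh(v_{t_1,t_2})(t_1)$ holds because $\Parikh(w)$ restricted to $\{t_1, t_2\}$ equals $\ell \cdot \Parikh(v_{t_1,t_2})$. Combined with the primeness of $\Parikh(w)$ and Proposition~\ref{WMG.prop}, this means that once $w$ is fireable from $M_0$, its firing returns to $M_0$, forming a small cycle with prime Parikh vector. Second, $w$ is fireable from $M_0$: by induction on the prefix length, at the marking reached after firing $\ell_1 \cdots \ell_i$, each input place $p_{a, \ell_{i+1}}$ of $\ell_{i+1}$ holds exactly the number of tokens dictated by the binary circuit after firing the projected prefix $\projection{\ell_1 \cdots \ell_i}{a \ell_{i+1}}$, which is a prefix of $v_{a, \ell_{i+1}}^{\ell_{a,\ell_{i+1}}}$ and therefore enables $\ell_{i+1}$ in the binary circuit.

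The main obstacle is to show that no firing sequence other than (rotations of) $w$ is possible from $M_0$, i.e. that at every marking $M_i$ reached by firing $\ell_1 \cdots \ell_i$, the only enabled transition is $\ell_{i+1}$. For any candidate $\ell' \neq \ell_{i+1}$, I would let $j$ be the smallest index (cyclically, with $j \geq i+1$) such that $\ell_j = \ell'$, and set $a = \ell_{j-1}$; then $(a, \ell')$ is cyclically contiguous in $w$, so $p_{a,\ell'}$ is a place of $N$. Since $w$ contains no occurrence of $\ell'$ strictly between positions $i+1$ and $j-1$ and $\ell_{j-1} = a$, the projection $\projection{\ell_1 \cdots \ell_i}{a \ell'}$ reaches a state of the binary circuit for $v_{a,\ell'}$ whose next enabled letter is $a$ and not $\ell'$. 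Hence $p_{a,\ell'}$ does not carry $\Parikh(v_{a,\ell'})(a)$ tokens, which blocks $\ell'$ in $N$. Together with the persistence, backward determinism and reversibility provided by Propositions~\ref{classic.prop} and~\ref{WMG.prop}, this yields that the reachability graph of $(N, M_0)$ contains exactly $|w|$ states arranged in a single cycle, hence is isomorphic to the circular LTS induced by $w$, proving cyclic WMG-solvability.
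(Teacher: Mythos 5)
Your proposal is correct and follows essentially the same route as the paper: build the WMG by merging the binary circuit solutions (from Theorem~\ref{circularbinarywords.thm}) of the projections onto cyclically contiguous pairs, show the intended next letter is always enabled, and block any other candidate $\ell'$ via the place coming from the pair formed by $\ell'$ and the letter immediately preceding its next occurrence, finishing with the primeness of $\Parikh(w)$ and Proposition~\ref{WMG.prop} to get circularity. The only cosmetic differences are that you keep a single place per ordered contiguous pair rather than both places of each circuit and you make the enabling/blocking induction more explicit; neither changes the substance of the argument.
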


\begin{proof}
For every such pair $(t_i,t_j)$, $i< j$, 
let $C_{i,j}= ((P_{i,j},T_{i,j},W_{i,j}),M_{i,j})$ be a circuit solution of $v$ 
for the subword $v^l = u_{i,j} = \projection{w}{t_it_j}$, 
obtained as in the construction of Theorem~\ref{circularbinarywords.thm}.
Assuming all these nets are place-disjoint 
(which is always possible since the Petri net solutions are considered up to isomorphism), 
consider the transition-merging\footnote{Also called sometimes the synchronisation on transitions.} 
of all these marked circuits.
The result is a WMG $\mathcal S'=(N',M_0')$ 
such that
$N'=(P',T,W')$ with 
$P' = \cup_{i,j} P_{i,j}$, 
$T = \cup_{i,j} T_{i,j}$,
$W' = \cup_{i,j} W_{i,j}$,
and $M'_0 = \cup_{i,j} M_{i,j}$.

Let $w$ be of the form $aw'$.
We prove that $a$ is the only transition 
enabled in $\mathcal S'$.

All the subwords of the form $\projection{w}{a,t}$ necessarily start with $a$. 
All the input places of the transition $a$ 
belong to the binary circuits defined 
by these subwords. 
Since these subwords are solvable by marked circuits
which we merged together,
all the input places of $a$ are initially enabled.
Now, let us suppose
that another transition $d$ is also initially enabled in $\mathcal S'$.
Since $d$ is not the first label of $w$,
another label $q$ appears in $w$ just before the first occurrence of $d$.
In the solution of $\projection{w}{d,q}$, $d$ is not initially enabled since $q$ must occur before; 
hence it is not enabled in the merging either. 
We deduce that $a$ is the only transition that is enabled in $S'$.

Now,
the same arguments apply to $w'' = w'a$
whose relevant subwords are solvable by the circuits in the same way,
and we deduce that  the WMG $\mathcal S'$ has the language $\pref(w^*)$.

Note that we did not use explicitely above the special form of $u$. 
Simply, the latter is necessary to build a circuit system $C_{i,j}$ with the language $\pref(u^*)=\pref(v^*)$. 
$C_{i,j}$ is a circular solution for $v$, but not for $u$ unless $\ell=1$.
The fact that the merging $\mathcal S'$ of all the $C_{i,j}$'s yields not only a system with the adequate language $\pref(w^*)$
but a circular solution of $w$ arises from the fact that $\Parikh(w)$ is prime (by Proposition~\ref{WMG.prop}).
We thus deduce that the WMG $\mathcal S'$ solves $w$ cyclically.
\qed
\end{proof}

\section{Synthesis of WMGs from Live Ternary LTS}\label{threelabels.sec} 

In this section,
we provide several conditions of WMG-solvability for a ternary LTS.
We first develop a characterisation of 
WMG-solvability for a subclass of the cyclic ternary words in Subsection~\ref{subclass-ternary.subsec}.
Then,
in Subsection~\ref{ternary-ce.subsec},
we construct two counter-examples to 
this condition:
one for four labels with three different values
in the Parikh vector,
and another one for five labels with 
only two different values.

\subsection{WMG-solvability in a Subclass of the Finite Circular Ternary LTS}\label{subclass-ternary.subsec} 

First,
we prove the other direction of Theorem \ref{kary.theo},
leading to a full characterisation of WMG-solvability
 for a special subclass of the ternary cyclic words.
 
 The proof exploits a WMG with $3$ transitions and $6$ places,
connecting $2$ places to each pair of transitions,
as illustrated in Fig. \ref{WMG3labels}.
In some cases,
a smaller number of places can solve the same LTS,
but we do not aim here at minimising the number of nodes in a solution.

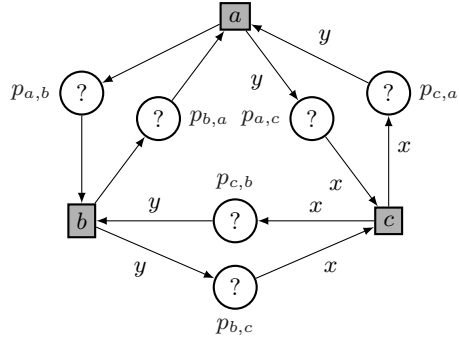
\begin{figure}[htbp]
\begin{center}
\begin{tikzpicture}[scale=0.68]
\node[place,tokens=0](pab)at(-1,1.5)[label=left:$p_{a,b}$]{?};
\node[place,tokens=0](pba)at(0.5,1)[label=right:$p_{b,a}$]{?};
\node[place,tokens=0](pac)at(3.5,1)[label=left:$p_{a,c}$]{?};
\node[place,tokens=0](pca)at(5,1.5)[label=right:$p_{c,a}$]{?};
\node[place,tokens=0](pbc)at(2,-2.3)[label=below:$p_{b,c}$]{?};
\node[place,tokens=0](pcb)at(2,-1)[label=above:$p_{c,b}$]{?};
\node[transition](a)at(2,3){$a$};
\node[transition](b)at(-1,-1){$b$};
\node[transition](c)at(5,-1){$c$};
\draw(pab)[]edge[-latex,bend left=0]node[above left]{}(b);
\draw(pba)[]edge[-latex,bend left=0]node[above right]{}(a);
\draw(a)[]edge[-latex,bend left=0]node[below right]{}(pab);
\draw(b)[]edge[-latex,bend left=0]node[below left]{}(pba);
\draw(pac)[]edge[-latex,bend left=0]node[below left]{$x$}(c);
\draw(c)[]edge[-latex,bend left=0]node[above right]{$x$}(pca);
\draw(pca)[]edge[-latex,bend left=0]node[above right]{$y$}(a);
\draw(a)[]edge[-latex,bend left=0]node[below left]{$y$}(pac);
\draw(pcb)[]edge[-latex,bend left=0]node[above]{$y$}(b);
\draw(b)[]edge[-latex,bend left=0]node[below left]{$y$}(pbc);
\draw(pbc)[]edge[-latex,bend left=0]node[below right]{$x$}(c);
\draw(c)[]edge[-latex,bend left=0]node[above]{$x$}(pcb);
\end{tikzpicture}
\end{center}
\caption{A generic WMG with three labels, 
with minimal T-semiflow $(x,x,y)$ 
and $\gcd(x,y)=1$.
}
\label{WMG3labels}
\end{figure}

\begin{theorem}[Cyclic solvability of ternary words]\label{cyclicsolvternary.theo}
Consider a ternary word $w$ over the alphabet $T$
with Parikh vector $(x,x,y)$ such that $gcd(x,y)=1$.
Then,
$w$ is cyclically solvable with a WMG
if and only if
$\forall u = \projection{w}{t_1t_2}$ 
such that $t_1\neq t_2 \in T$,
and $w = (w_1 t_1 t_2 w_2)$ or $w = (t_2 w_3 t_1)$,
$u = v^\ell$ for some positive integer $\ell$,
$\Parikh(v)$ is prime,
and $v$ is cyclically solvable by a circuit (i.e. a circular net).
\end{theorem}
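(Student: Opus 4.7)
The $(\Leftarrow)$ direction is immediate from Theorem~\ref{kary.theo}: the vector $(x,x,y)$ is prime (since $\gcd(x,y)=1$), so the hypothesis of Theorem~\ref{kary.theo} coincides with the condition stated here, and a WMG cyclically solving $w$ exists.

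For the $(\Rightarrow)$ direction, let $\mathcal{S}$ be a WMG that cyclically solves $w$. By Proposition~\ref{WMG.prop} the unique minimal T-semiflow $\pi$ of $\mathcal{S}$ has support $T$, and since $\Parikh(w)=(x,x,y)$ is a positive multiple of $\pi$ and is itself prime, we have $\pi=(x,x,y)$. Thus $\mathcal{S}$ can be assumed to fit the generic form of Fig.~\ref{WMG3labels}, arc weights being fixed (up to a common scaling) by the $\pi$-ratios of their endpoints, with only the initial marking left to be chosen. I would then fix a contiguous pair $(t_1,t_2)$, with $w=w_1 t_1 t_2 w_2$ or $w=t_2 w_3 t_1$, set $u=\projection{w}{t_1 t_2}$, and analyse the binary sub-WMG $\mathcal{S}_{12}$ made of $t_1,t_2$ and the two places $p_{t_1,t_2},p_{t_2,t_1}$, with the inherited initial marking. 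Since $w$ returns $\mathcal{S}$ to $M_0$, the projection $u$ is a firing sequence of $\mathcal{S}_{12}$ returning to its inherited initial marking; Proposition~\ref{WMG.prop} applied to $\mathcal{S}_{12}$ then gives that its minimal T-semiflow is the prime vector $\pi'=\Parikh(u)/\gcd(\Parikh(u))$ and $\Parikh(u)=\ell\cdot\pi'$ with $\ell=\gcd(\Parikh(u))$. Writing $v$ for a period of $u$ with $\Parikh(v)=\pi'$, the task reduces to showing that $u=v^\ell$ and that $v$ matches the canonical shape of Theorem~\ref{circularbinarywords.thm}; that theorem then certifies that $v$ is cyclically solvable by a circuit.

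The main obstacle is to establish the literal identity $u=v^\ell$ rather than mere equality of Parikh vectors, since non-periodic cyclic firings (such as $abba$ in place of $(ab)^2$) are admitted by $\mathcal{S}_{12}$ alone as soon as its inherited marking exceeds the critical value of Lemma~\ref{Binary-states.lem}. To exclude such deviations, I would trace the marking equation along the whole circular LTS of $w$, additionally using the four places $p_{t_1,t_3},p_{t_3,t_1},p_{t_2,t_3},p_{t_3,t_2}$ linking $\{t_1,t_2\}$ to the third transition $t_3$. The requirement that at each intermediate state of the circular LTS exactly the next label of $w$ is enabled yields competing upper and lower bounds on the initial tokens of these $t_3$-places, and any deviation of $u$ from the canonical rotational form of Theorem~\ref{circularbinarywords.thm} produces incompatible bounds. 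A guiding instance is $w=abbac$: the disabling of $a$ at the state reached after $ab$ forces $M_0(p_{c,a})\leq 1$, while its re-enabling after $abb$ demands $M_0(p_{c,a})\geq 2$, a contradiction that forbids the non-canonical projection $u=abba$. Carrying out this argument systematically for every contiguous pair, and concluding with Theorem~\ref{circularbinarywords.thm} to identify the canonical shape of each period $v$, completes the proof of necessity.
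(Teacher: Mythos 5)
Your $(\Leftarrow)$ direction and the initial setup of $(\Rightarrow)$ (reducing to the generic six-place WMG of Fig.~\ref{WMG3labels} with arc weights forced by the minimal T-semiflow $(x,x,y)$, so that only the initial marking is free) match the paper exactly. The worked instance $w=abbac$ is also correct in its arithmetic.

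However, the core of the necessity proof is missing: after correctly identifying the obstacle, you replace the argument by the assertion that ``any deviation \ldots produces incompatible bounds'' and one two-line example, with an appeal to ``carrying out this argument systematically.'' This is a plan, not a proof, and it also conflates two sub-problems that require genuinely different treatments. For the pair $(a,b)$ with Parikh $(x,x)$ the issue is indeed literal periodicity ($\ell=x$, $v=ab$ or $ba$), and the paper settles it by a sequence of local enabling arguments excluding the patterns $aa$, $bb$, $ac^ka$ and $bc^kb$ in $w^2$ (each exclusion itself needs a nontrivial chase, e.g.\ the $aac$ case forces an infinite tail $aa(cba)^\omega$). For the pairs involving $c$, since $\gcd(x,y)=1$ the projection is automatically prime and $\ell=1$, so there is nothing to prove about periodicity; what must be shown is that the projection is cyclically solvable by a circuit, and your ``incompatible bounds on the $t_3$-places'' heuristic does not obviously address this. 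The paper's argument here is of a different nature: it invokes the characterisation behind Theorem~\ref{circularbinarywords.thm} to deduce that an unsolvable prime binary projection forces a reachable marking $M$ at which \emph{both} places $p_{a,c}$ and $p_{c,a}$ are enabled, then runs four subcases ($x>y$ versus $x\le y$, crossed with which of $a$, $c$ fires at $M$) to pin down the rotation $w_M$ to a rigid shape such as $(ab)^{k_1}c\cdots(ab)^{k_y}c$, and finally exploits the $a\leftrightarrow b$ renaming symmetry of that shape to conclude that all four places $p_{a,c},p_{c,a},p_{b,c},p_{c,b}$ are enabled at $M$, so two transitions are concurrently enabled, contradicting circularity. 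None of this case analysis, nor any workable substitute for it, appears in your proposal, so the $(\Rightarrow)$ direction remains unproved.
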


\begin{proof}
The right-to-left direction of the equivalence, 
assuming the properties on the projections,
is true by Theorem \ref{kary.theo},
for the particular case that $|T|=3$. 
We thus deduce the cyclic solvability.

In the rest of this proof,
we consider the other direction,
assuming circular solvability. 
If $x=y=1$, the claim is trivially obtained since $w=t_1t_2t_3$, up to some permutation, 
and an easy marked graph solution may be found. 
Let us thus assume that $x\neq y$. 

Let us write $T=\{a,b,c\}$.
The general form of a solution has 3 transitions and 6 places (one for each ordered pair of transitions).
Additional places are never necessary in the presence of a T-semiflow. 
Indeed, 
let $p_{u,v}$ be a place between transitions $u$ and $v$, 
$W_u$ the weight on the arc to this place and $W_v$ the one from this place.   
Due to the presence of the T-semiflow $\Parikh(w)$, we have 
$\Parikh(w)(u)\cdot W_u=\Parikh(w)(v)\cdot W_v$,
and we may choose $W_u=\Parikh(w)(v)$ as well as $W_v=\Parikh(w)(u)$.
We may also divide the weights around each place by their gcd.
In our case, this leads to the configuration illustrated by Fig.~\ref{WMG3labels}.
We denote by RG the reachability graph of a solution based on this net. 

We show first that the projection $\projection{w}{ab}$ of $w$ 
on $\{a,b\}$ is of the form $(ab)^k$ or $(ba)^k$ for some positive integer $k$.

There is no pattern $aab$ in $w^2$ (which allows to consider sequences on the border of two consecutive $w$'s)
because, if $M_1[a\rangle M_2[a\rangle M_3[b\rangle$, 
$M_1(p_{c,b})=M_2(p_{c,b})=M_3(p_{c,b})\geq y$
and $M_2(p_{a,b})\geq 1$,
which would also allow to perform $b$ after the first $a$ and RG is not circular.

If there is a pattern $aac$ and $M_1[a\rangle M_2[a\rangle M_3[c\rangle M_4$, $M_2(p_{a,c})\geq y$ and
$M_1(p_{b,c})=M_2(p_{b,c})=M_3(p_{b,c})\geq x$,
hence $y<x$ otherwise $M_2$ already enables $c$ and RG is not circular.
Then $M_4(p_{a,b})\geq 2$, $M_4(p_{c,b})\geq x>y$ and $M_4(p_{c,a})\geq x> y$, so that $M_4[b\rangle$; 
hence $M_4(p_{b,a})=0$ since otherwise we also have $M_4[a\rangle$ and RG is not circular.
We thus have $M_4[ba\rangle M_5$ for some marking $M_5$, with $M_5(p_{b,a})=0$,
so that $M_5$ does not enable $a$; 
$M_5$ does not enable $b$ either since otherwise we could also perform $M_4[bb\rangle$ and again RG is not circular.
Thus, we have $M_4[bac\rangle$, and then we are in a situation similar to the one after the first $c$.
As a consequence, we must have a sequence $M_1[aa(cba)^\omega \rangle$,
and RG is not circular.

Hence, in $w^2$ we cannot have a sequence $aa$, nor $bb$ by symmetry. 

Let us now assume that a pattern $ac^ka$ exists in $w^2$ for some $k\geq 1$.
Since the first firing of $a$ puts a token in $p_{a,b}$ 
and the next firing of $c$ does not enable $b$, we must have $x < y$.
Let us assume in the circular RG that $M_1[ac^ka\rangle M_2[\sigma\rangle M_1$.
$\sigma$ is not empty since it must contain $x$ times $b$.
It cannot end with an $a$, 
since otherwise we have a sequence $aa$, which we already excluded.
It cannot end with a $b$ either, 
since $M_1(p_{b,a})\geq 2$ (in order to fire $a$ twice without a $b$ in between),
so that if $M_3[b\rangle M_1[a\rangle$, $M_3(p_{b,a})\geq 1$, 
we must also have $M_3[a\rangle$, and RG is not circular. 
Hence, 
$\sigma$ ends with a $c$ and for some reachable 
markings $M'_2$ and $M_3$ we have 
 $M_3[c\rangle M_1[ac^k\rangle M'_2[a\rangle M_2$.

We deduce that $M'_2(p_{a,b})\geq 1$, $M'_2(p_{c,b})\geq (k+1)\cdot x$; 
hence $(k+1)\cdot x<y$ otherwise $M'_2$ also enables $b$ and RG is not circular.
Also, $M_3(p_{b,a})\geq 2$ and $M_3(p_{c,a})\geq 2\cdot y - (k+1)\cdot x>y$, 
so that $M_3$ also enables $a$ and again RG is not circular.
As a consequence, 
we cannot have a pattern $ac^ka$, nor $bc^kb$ by symmetry, and 
$\projection{w}{ab}=(ab)^k$ 
or $\projection{w}{ab}=(ba)^k$ for the positive integer $k=x$.
With $v=ab$ or $v=ba$, we have the adequate solvability property for $\projection{w}{ab}$, 
and we can assume in the following that the sum of the tokens present in places $p_{a,b}$ and $p_{b,a}$
is $1$ for all reachable markings.

Let us now suppose that we have a WMG $\mathcal S$ 
solving $w$ cyclically, whose underlying net is pictured in Fig.~\ref{WMG3labels}.
From the previous results, 
we can assume that  $M_0(p_{a,b})+M_0(p_{b,a})=1$ in $\mathcal S$,
 this equality being preserved by all reachable markings. 
To show that $u= \projection{w}{ac}$ has the adequate form (the case for $\projection{w}{bc}$ is symmetrical),
let us consider the circuit $C_{ac}$, restriction of $\mathcal S$ to $p_{a,c},c,p_{c,a},a$.

Let us assume in the following that 
$u$ cannot be written under the form $u=v^\ell$ for some positive integer $\ell$, 
where $\Parikh(v)$ is prime and $v$ is cyclically solvable. 
Since $gcd(x,y)=gcd(\Parikh(w)(a),\Parikh(w)(c))=gcd(\Parikh(u)(a),\Parikh(u)(c))=1$,
$\Parikh(u)$ is prime and $u=v$ with $\ell=1$,
hence $u$ is not cyclically solvable.
For the net $N$ considered,
this implies the existence of some prefix $\sigma_{ac}$ of $u$ 
such that,
for every initial marking of $C_{ac}$ that enables the sequence $u$ in this circuit,
the marking reached by firing $\sigma_{ac}$
necessarily enables both places $p_{a,c}$ and $p_{c,a}$.
Indeed,
Theorem~\ref{circularbinarywords.thm} specifies the finite set of all possible minimal markings
that allow cyclic solvability,
and each such marking enables exactly one place of the circuit.
Every other non-circular reachability graph is defined by some larger initial marking
 and contains a marking that enables both places.\\
Thus,
for any initial marking $M_0$ that makes the system $\mathcal S = (N,M_0)$ solve $w$ cyclically,
the smallest prefix of $w$ whose projection on $\{a,c\}$ equals $\sigma_{ac}$ 
leads to a marking $M$ in the WMG that enables $p_{a,c}$ and $p_{c,a}$.

Hereafter, 
we consider all the cases in which
either $a$ or $c$ is enabled from $M$.
In each case,
we describe the shape of the LTS
and
deduce from it a reachable marking
that enables two transitions, 
hence a contradiction.

Case $x>y$: In this case, in $\mathcal S$, we cannot have two consecutive $c$'s.

$-$ Subcase in which $M$ enables the place $p_{a,c}$ as well as the transition $a$ in the WMG, 
hence its input places $p_{b,a}$ and $p_{c,a}$.
Since $M[a\rangle$, transition $c$ is not enabled at $M$, implying that $p_{b,c}$ is not enabled by $M$.
We deduce: $M(p_{a,c}) > M(p_{b,c})$.
Since $p_{a,c}$ is enabled by $M$,
the last occurrence of a transition before the next firing of $c$ is necessarily $b$,
implying: $M[(ab)^kc\rangle M_1$ for some integer $k \ge 1$  and some marking $M_1$.
The inequality mentioned above is still valid at $M_1$,
i.e. 
$M_1(p_{a,c}) > M_1(p_{b,c})$,
and we iterate the same arguments from $M_1$ to deduce
that the rotation $w_M$ of $w$ starting at $M$ 
is of the form $(ab)^{k_1}c \ldots (ab)^{k_y}c$ with $\sum_{i=1..y} k_i=x$ and each $k_i$ is positive.

$-$ Subcase in which $M$ enables the place $p_{c,a}$ as well as the transition $c$ in the WMG, 
hence its input places $p_{a,c}$ and $p_{b,c}$.
Thus,
the firing of $c$ from $M$
cannot enable $a$,
implying that $M(p_{c,b})<M(p_{c,a})$
and that $M[c(ba)^kc\rangle M_1$ for some positive integer $k$ and a marking $M_1$.
The inequality is still valid at $M_1$,
i.e. $M_1(p_{c,b})<M_1(p_{c,a})$,
from which
we deduce  
that the rotation $w_M$ of $w$ starting at $M$ 
is of the form $c(ba)^{k_1} \ldots c(ba)^{k_y}$ with $\sum_{i=1..y} k_i=x$  and each $k_i$ is positive.

Case $x \le y$:

$-$ Subcase in which $M$ enables the place $p_{a,c}$ as well as the transition $a$ in the WMG, 
hence its input places $p_{c,a}$ and $p_{b,a}$.
Thus,
the firing of $a$ from $M$
cannot enable $c$,
implying that $M(p_{b,c})<M(p_{a,c})$
and that $M[abc^k\rangle M_1$ for some positive integer $k$ and a marking $M_1$,
at which the same inequality is still valid.
We deduce  
that the rotation $w_M$ of $w$ starting at $M$ is of the form $abc^{k_1} \ldots abc^{k_x}$ with $\sum_{i=1..x} k_i=y$
 and each $k_i$ is positive.

$-$ Subcase in which $M$ enables the place $p_{c,a}$ as well as the transition $c$ in the WMG, 
hence its input places $p_{a,c}$ and $p_{b,c}$.
Thus, 
firing one or several $c$'s from $M$ does not enable $a$,
and 
$M(p_{c,b})<M(p_{c,a})$,
implying that 
$M[c^kba\rangle M_1$ for some positive integer $k$ and a marking $M_1$,
at which the same inequality is still valid.
We deduce 
that the rotation $w_M$ of $w$ starting at $M$ is of the form $c^{k_1}ba \ldots c^{k_x}ba$ with $\sum_{i=1..x} k_i=y$
 and each $k_i$ is positive.

In each of the four cases developed above,
we observe that each sequence of $ab$ or $ba$ could be seen as an atomic firing,
and $\projection{w_M}{b,c}$ is obtained from $\projection{w_M}{a,c}$ by renaming each $a$ into one $b$.
This implies that the deletion of the initial useless tokens (also known as frozen tokens, i.e. never used by any firing)
yields a system in which some reachable marking
distributes the tokens in the same way in the places between $c$ and $a$ as in the places between $c$ and $b$.
This is for example the case of the marking M if it does not contain useless tokens.\\
We deduce that $M$ (with or without useless tokens) enables 
all four places $p_{a,c}$, $p_{c,a}$, $p_{b,c}$ and $p_{c,b}$,
thus enabling two transitions of the WMG at least.
This contradicts the cyclic solvability of $w$,
implying that $v=u$ is cyclically solvable by a circuit.
Hence the claim.
\qed \end{proof}

\subsection{Counter-examples for $4$ and $5$ Labels}\label{ternary-ce.subsec} 

In Theorem \ref{cyclicsolvternary.theo},
we provided a characterisation of cyclic WMG-solvability 
for ternary words $w$ such that $\Parikh(w)$ is prime with two values.
However, 
this result does not apply to words $w$ 
over $4$ labels with $3$ values
nor $5$ labels with $2$ values,
even if $\Parikh(w)$ is prime.
Indeed, 
Fig. \ref{counterex.fig} 
pictures two counter-examples:
on the left,
the WMG cyclically solves the word 
$w = aacbbdabd$ with $\Parikh(w)=(3,3,1,2)$, 
which is prime,
while its projection $u=aabbab$ on $\{a,b\}$ 
leads to $v=u$, 
and $\Parikh(v)=(3,3)$ is not prime, 
hence is not cyclically solvable by a WMG;
on the right,
the WMG cyclically solves the word 
$w = aacbbeabd$ with $\Parikh(w)=(3,3,1,1,1)$, which is prime,
while its projection $u=aabbab$ on $\{a,b\}$ 
leads to $v=u$, 
and $\Parikh(v)=(3,3)$ 
is not cyclically solvable by a WMG.

\begin{figure}[htbp]
\begin{center}
\begin{tikzpicture}[scale=0.8]
\begin{scope}[]
\node[place,tokens=0](p0)at(2,2.5)[]{$ $};
\node[place,tokens=1](p1)at(0,3.5)[]{$ $};
\node[place,tokens=4](p2)at(0,1.5)[]{$ $};
\node[place,tokens=0](p3)at(4,1.5)[]{$ $};
\node[place,tokens=0](p4)at(4,3.5)[]{$ $};
\node[transition](a)at(0,2.5){$a$}
	edge[post]node[left]{$ $}(p1)
	edge[post]node[above]{$ $}(p0)
	edge[pre]node[right, near end]{$2$}(p2)
	;
\node[transition](b)at(4,2.5){$b$}
	edge[pre]node[above]{$ $}(p0)
	edge[post]node[left, near end]{$2$}(p3)
	edge[pre]node[below right]{$ $}(p4)
	;
\node[transition](c)at(2,3.5){$c$}
	edge[pre]node[below]{$3$}(p1)
	edge[post]node[below]{$3$}(p4)
	;
\node[transition](d)at(2,1.5){$d$}
	edge[pre]node[above]{$3$}(p3)
	edge[post]node[above]{$3$}(p2)
	;
\end{scope}
\end{tikzpicture}
\hspace{1cm}
\begin{tikzpicture}[scale=0.8]
\begin{scope}[]
\node[place,tokens=1](p0)at(5,0)[]{$ $};
\node[place,tokens=0](p1)at(5,3)[]{$ $};
\node[place,tokens=2](p2)at(0,0)[]{$ $};
\node[place,tokens=1](p3)at(0,3)[]{$ $};
\node[place,tokens=0](p4)at(2.5,2)[]{$ $};
\node[place,tokens=3](p5)at(1,1)[]{$ $};
\node[place,tokens=0](p6)at(4,1)[]{$ $};
\node[transition](a)at(0,2){$a$}
	edge[post]node[above]{$ $}(p4)
	edge[post]node[above]{$ $}(p3)
	edge[pre]node[above]{$ $}(p5)
	edge[pre]node[left]{$ $}(p2)
	;
\node[transition](b)at(5,2){$b$}
	edge[pre]node[]{$ $}(p1)
	edge[pre]node[above]{$ $}(p4)
	edge[post]node[above, near end]{$ $}(p6)
	edge[post]node[]{$ $}(p0)
	;
\node[transition](c)at(2.5,3){$c$}
	edge[pre]node[below]{$3$}(p3)
	edge[post]node[below]{$3$}(p1)
	;
\node[transition](d)at(2.5,1){$d$}
	edge[pre]node[above]{$3$}(p6)
	edge[post]node[above]{$3$}(p5)
	;
\node[transition](e)at(2.5,0){$e$}
	edge[post]node[above]{$3$}(p2)
	edge[pre]node[above]{$3$}(p0)
	;
\end{scope}
\end{tikzpicture}
\caption{The WMG on the left solves $aacbbdabd$
cyclically,
and the WMG on the right solves
$aacbbeabd$ cyclically.}
\label{counterex.fig}
\end{center}
\end{figure}
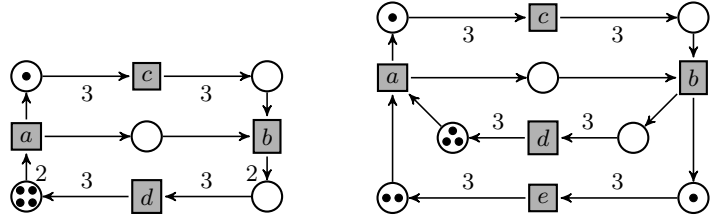

However, presently we do not know what happens for ternary words $w$ such that $\Parikh(w)$ is prime with three values,
nor when $w$ has four labels and $\Parikh(w)$ is prime with two values. 

\section{Conclusions and Perspectives}\label{conclu.sec}

In this work, we specialised 
previous methods dedicated to the analysis and synthesis of weighted marked graphs,
a well-known and useful subclass of weighted Petri nets
allowing to model various real-world applications.

By restricting the size of the alphabet to $2$ labels,
we provided a characterisation of the WMG-solvable labelled transition systems formed of a single cycle.
We also extended this investigation
to finite LTS containing several cycles,
and
to infinite LTS.

Then,
leaving out the restriction
on the number of labels,  
we developed a geometric characterisation
for acyclic LTS,
using convex sets and the theory
of regions;
in the case circular LTS,  
we proposed a sufficient condition of WMG-solvability. 

We exploited this sufficient condition  
to obtain
a full characterisation of circular 
WMG-solvability 
for a subset of the possible Parikh vectors
over three labels.

Finally, 
we proved that this condition for $3$ labels 
does not extend to circular LTSs with 
$4$ labels 
and three different Parikh values,
nor with $5$ labels and two Parikh values.  

As perspectives of this work,   
we believe that relaxations
of our statements
may lead to other characterisations
of WMG-solvable LTS,
together with efficient algorithms
for their analysis and synthesis.

\section*{Acknowledgements} 
We would like to thank the anonymous referees for their involvement and useful suggestions.

\bibliographystyle{splncs}
\bibliography{topnoc-2019-DEH}

\end{document}